\documentclass{article}
\usepackage{amsmath,mathtools}

\newcommand{\xddots}{%
  \raise 4pt \hbox {.}
  \mkern 6mu
  \raise 1pt \hbox {.}
  \mkern 6mu
  \raise -2pt \hbox {.}
}

\textheight=23cm \textwidth=17.5cm \voffset=-1.5cm \hoffset=-2.3cm
\pdfminorversion=4
\usepackage{alltt}
\usepackage[vcentermath]{youngtab}
\usepackage[all]{xy}
\usepackage{latexsym,amssymb,amsmath,amsfonts, amsthm, xcolor}

\newcommand{\CC}{\mathbb{C}}


\def\vbar{\mathchoice{\vrule height6.3ptdepth-.5ptwidth.8pt\kern-.8pt}
  {\vrule height6.3ptdepth-.5ptwidth.8pt\kern-.8pt}
  {\vrule height4.1ptdepth-.35ptwidth.6pt\kern-.6pt}
  {\vrule height3.1ptdepth-.25ptwidth.5pt\kern-.5pt}}
\def\fudge{\mathchoice{}{}{\mkern.5mu}{\mkern.8mu}}
\def\bbc#1#2{{\rm \mkern#2mu\vbar\mkern-#2mu#1}}
\def\bbb#1{{\rm I\mkern-3.5mu #1}}
\def\bba#1#2{{\rm #1\mkern-#2mu\fudge #1}}
\def\bb#1{{\count4=`#1 \advance\count4by-64 \ifcase\count4\or\bba A{11.5}\or
  \bbb B\or\bbc C{5}\or\bbb D\or\bbb E\or\bbb F \or\bbc G{5}\or\bbb H\or
  \bbb I\or\bbc J{3}\or\bbb K\or\bbb L \or\bbb M\or\bbb N\or\bbc O{5} \or
  \bbb P\or\bbc Q{5}\or\bbb R\or\bbc S{4.2}\or\bba T{10.5}\or\bbc U{5}\or
  \bba V{12}\or\bba W{16.5}\or\bba X{11}\or\bba Y{11.7}\or\bba Z{7.5}\fi}}





\newtheorem{theorem}{Theorem}
\newtheorem{itlemma}{Lemma}[section]
\newtheorem{itproposition}[itlemma]{Proposition}
\newtheorem{itcorollary}[itlemma]{Corollary}
\newtheorem{itremark}[itlemma]{Remark}
\newtheorem{itremarks}[itlemma]{Remarks}
\newtheorem{itdefinition}[itlemma]{Definition}
\newtheorem{itexample}[itlemma]{Example}

\newenvironment{lemma}{\begin{itlemma}\rm}{\end{itlemma}} 
\newenvironment{remark}{\begin{itremark}\rm}{\end{itremark}} 
\newenvironment{remarks}{\begin{itremarks} \rm}{\end{itremarks}}
\newenvironment{corollary}{\begin{itcorollary}\rm}{\end{itcorollary}}
\newenvironment{proposition}{\begin{itproposition}\rm}{\end{itproposition}}
\newenvironment{definition}{\begin{itdefinition}\rm}{\end{itdefinition}}
\newenvironment{example}{\begin{itexample}\rm}{\end{itexample}}
\newenvironment{fact}{\noindent {{\bf Fact}}:\ \ }{\hfill \medskip}
\newenvironment{claim}{\noindent {\em Claim}. \ \ }{\hfill \medskip}
\newcommand{\be}[1]{\begin{equation}\label{#1}}
\newcommand{\ee}{\end{equation}}
\newcommand{\bl}[1]{\begin{lemma}\label{#1}}
\newcommand{\br}[1]{\begin{remark}\label{#1}}
\newcommand{\brs}[1]{\begin{remarks}\label{#1}}
\newcommand{\bt}[1]{\begin{theorem}\label{#1}}
\newcommand{\bd}[1]{\begin{definition}\label{#1}}
\newcommand{\bp}[1]{\begin{proposition}\label{#1}}
\newcommand{\bc}[1]{\begin{corollary}\label{#1}}
\newcommand{\bfact}[1]{\begin{fact}\label{#1}}
\newcommand{\bex}[1]{\begin{example}\label{#1}}
\newcommand{\ec}{\end{corollary}}
\newcommand{\efact}{\end{fact}}
\newcommand{\eex}{\end{example}}
\newcommand{\el}{\end{lemma}}
\newcommand{\er}{\end{remark}}
\newcommand{\ers}{\end{remarks}}
\newcommand{\et}{\end{theorem}}
\newcommand{\ed}{\end{definition}}
\newcommand{\ep}{\end{proposition}}
\newcommand{\epr}{\end{proof}}
\newcommand{\bpr}{\begin{proof}}
\newcommand{\bcl}{\begin{claim}}
\newcommand{\ecl}{\end{claim}}

\newcommand{\bi}{\begin{itemize}}
\newcommand{\ei}{\end{itemize}}
\newcommand{\ben}{\begin{enumerate}}
\newcommand{\een}{\end{enumerate}}

\usepackage{graphicx}

\title{Symmetric States and Dynamics of Three Quantum Bits}
\author{Francesca Albertini\thanks{Dipartimento di Tecnica e Gestione del Sistemi Industriali, Universita' di Padova, Vicenza, Italy, albertin@math.unipd.it}  \and Domenico D'Alessandro\thanks{Corresponding Author, Department of Mathematics, Iowa State University, Ames IA 50011, U.S.A., daless@iastate.edu}}
\date{\today}


\begin{document}

\maketitle

\begin{abstract}

The unitary group acting on the Hilbert space ${\cal H}:=(\CC^2)^{\otimes 3}$ of three quantum bits admits a Lie subgroup, $U^{S_3}(8)$,  of elements which permute with the symmetric group of permutations. 
Under the action of such Lie subgroup, the Hilbert space ${\cal H}$ splits into three invariant subspaces of dimensions $4$, $2$ and $2$ respectively, each corresponding to an irreducible representation of $su(2)$. The subspace of dimension $4$ is uniquely determined and corresponds to states that are themselves invariant under the action of the symmetric group. This is the so called  {\it symmetric sector.}

We provide an analysis of  pure states in the symmetric sector  of three quantum bits for what concerns their entanglement properties, separability criteria and dynamics under the Lie subgroup $U^{S_3}(8)$. We parametrize all the possible invariant two-dimensional 
subspaces and extend the previous analysis to these subspaces as well. We propose a physical set up for the states and dynamics we study which consists of  a symmetric network of three spin $\frac{1}{2}$ particles under a common driving electro-magnetic field. {\color{black} For such set up, we solve a control theoretic problem which consists of driving a separable state to a state with maximal distributed entanglement.}

\end{abstract}

\vspace{0.25cm}

\vspace{0.25cm}

\vspace{0.25cm}

\vspace{0.25cm}

{\bfseries Keywords} Quantum entanglement, symmetric states, quantum symmetric evolution, spin networks, quantum control.

\section{Introduction} 

The study of quantum states is a current line of research in quantum physics (see, e.g., \cite{ZK}), in particular for what concerns their entanglement properties. Entanglement is considered a resource in quantum information processing and classifying states according to the amount and type of entanglement is a problem of both fundamental and practical importance. A related problem is to study how quantum dynamical  evolution changes the entanglement of states (see,e g., \cite{ZS} for the two qubits case) and in the quantum control setting \cite{Mikobook} how to induce such dynamics in a specific physical setup. The two qubits case is fairly well understood, while the case of three qubits requires further exploration. In particular,  three qubits are the simplest type of  systems which display two types of entanglement: a {\it pairwise entanglement}  quantifying the entanglement between pairs of qubits and a {\it distributed entanglement}  \cite{Coff}.

In this paper we are concerned with three qubits systems whose dynamics are subject to a permutation symmetry among the three qubits. The possible unitary evolutions on the Hilbert space ${\cal H}:=(\CC^2)^{\otimes 3}$ of the system consists of unitaries  which commute with the permutation group $S_3$. Such a Lie subgroup of $U(8)$, which we denote by $U^{S_3}(8)$ has dimension $20$ \cite{noisym1}. Its Lie algebra,  $u^{S_3}(8)$,  is spanned by the  matrices 
\be{elem}
i \Pi (\sigma_1 \otimes \sigma_2 \otimes \sigma_3).  
\ee
where $\Pi$ denotes the  {\it symmetrization operator} $\Pi:=\frac{1}{3!} \sum_{P \in S_3} P$ and $\sigma_{1,2,3}$ are chosen to be the $2 \times 2$ identity or one of the Pauli matrices 
\be{PauliMat}
\sigma_x:=\begin{pmatrix} 0 & 1 \cr 1 & 0  \end{pmatrix}, \qquad \sigma_y:=
 \begin{pmatrix} 0 & i \cr -i & 0  \end{pmatrix}, \qquad
 \sigma_z:= \begin{pmatrix}  1 & 0 \cr 0 & -1\end{pmatrix}.
\ee
The number of the matrices (\ref{elem}) is equal to the way of choosing the number of occurrences of the identity and $\sigma_{x,y,z}$ out of three positions, which is equal to $20$.\footnote{The general argument for $n$ qubit is presented in \cite{noisym1}.}

The three qubit subspace ${\cal H}$ under the action of the Lie algebra $u^{S_3}(8)$, or of the Lie group $U^{S_3}(8)$,  splits into three invariant subspaces two of which have dimension $2$ and one of which has  dimension $4$. These correspond to irreducible representations of $su(2)$ \cite{noisym1} \cite{Sym3}. The subspace of dimension $4$ is uniquely determined. It is  the so-called {\it symmetric sector} \cite{Ribeiro}, that is, the space 
of the states that do not change under permutation of two qubits. For the three qubits case, we shall use the orthogonal  basis (not normalized)
\be{earray}
\phi_0:=|000\rangle 
\ee
$$
\phi_1:=  |100\rangle + |010\rangle +|001\rangle 
$$
$$
\phi_2:=  |110\rangle + |110\rangle +|101\rangle \nonumber
$$
$$
\phi_3:=|111\rangle. 
$$
This notation follows  the number of $1$'s that appear in each state, in the sense that $\phi_j$ is the sum of the states in the computational basis  which have $j$ $1$'s. States in the symmetric tensor where studied in \cite{Ribeiro} and a complete list of  invariants under local unitary and symmetric transformations  was given there.

We will write a general state in the symmetric sector as 
\be{psibas}
\psi:=c_0 \phi_0+ c_1 \phi_1 + c_2 \phi_2 + c_3 \phi_3, 
\ee 
for complex coefficients $c_0, c_1, c_2, c_3$, with $ |c_0|^2+3 |c_1|^2
+ 3 |c_2|^2+ |c_3|^2=1$. This can also be seen as a  four level system which can be used to implement two qubits.  

The subspaces of dimension 2, which represent isomorphic representations of $su(2)$ are {\it  not} uniquely. . 

The results presented in this paper and a plan for the following sections is as follows. In section \ref{Para2} we perform the decomposition of the Hilbert space ${\cal H}:=\CC^{\otimes 3}$ into invariant subspaces for $u^{S_3}(8)$.  We see that while the four dimensional invariant subspace is uniquely determined (it is the symmetric sector)  the  pair of $2$-dimensional invariant subspaces is  not. We obtain a parametrization of  all the possible decompositions.  In section \ref{entgen} we recall the general measures of entanglement for three qubits introduced in \cite{Coff}, in particular the {\it pairwise entanglement} (concurrence), quantifying the entanglement between two qubits when the third one 
is traced out and the {\it distributed entanglement}. These measures are invariant under 
local unitary transformations. In section \ref{ESS},  we calculate the entanglement measures for states in the symmetric sector and give conditions of separability. The expressions we find complement the ones found in \cite{Ribeiro} which are based on the Majorana polynomial representations of states \cite{Makele}. We also briefly recall such a representation which has an elegant geometric representation and provide a complete set of local invariant for these states. In section \ref{E2S} we analyze the entanglement of the states in the  two dimensional invariant subspaces. 
We then turn our attention to the {\it dynamics} on the invariant subspaces.  The dynamics we are interested in are the ones in the Lie subgroup of the unitary group $U(8)$ which permutes with the symmetric group, i.e., $U^{S_3}(8)$  and that, 
 as we have recalled, leaves the above subspaces invariant.   In section \ref{SSD1} we study the dynamics on the invariant subspaces. In particular for the symmetric sector we prove that the group of local (symmetric) unitary transformations is a maximal  Lie subgroup of $U^{S_3}(8)$ which leaves the adopted measures of entanglement unchanged. We then study in general how the elements of the  group $U^{S_3}(8)$ change the entanglement in this subspace. {\color{black} We give a factorization of possible unitary evolutions on the symmetric sector in evolutions  that modify the entanglement and  evolutions that do not.}  In section \ref{App} we give  a physical application of 
 the analysis described in the previous sections. We consider a symmetric network of three spin $\frac{1}{2}$ particles coupled via identical Ising interaction and driven by a common electro-magnetic field. The dynamics of this model satisfies the symmetry assumptions   considered in this paper. We propose an algorithm to drive such a system from a separable state (with zero entanglement) to  a state with maximum distributed entanglement. A summary of the results is given in section \ref{Summa}.

\section{Decomposition into Invariant Subspaces}\label{Para2}

Consider the Lie algebra $su^{S_3}(8)$, that is,  the subalgebra of $su(8)$ of matrices which commute with the permutation matrices in $S_3$. For $su(8)$,  we consider the standard representation on $\CC^8\simeq {\cal H}$ and therefore the matrices in $su^{S_3}(8)$ are also $8 \times 8$. In appropriate coordinates such matrices take a block diagonal form with blocks of dimension $4\times 4$, $2\times 2$ and $2 \times 2$,  which correspond to invariant subspaces of  ${\cal H}$ of dimensions  $4$, $2$ and $2$, respectively \cite{noisym1}. Such subspaces correspond to irreducible representations of $su(2)$ of dimensions $4$, $2$ and $2$ respectively (the ones of dimension $2$ being isomorphic representations).  To obtain a basis for such subspaces, in terms of the  computational  basis $\{|  jkl\rangle \, \, ,  j,k,l=0,1 \}$  one may  apply standard methods of the quantum theory of angular 
momentum (see, e.g., \cite{Hamarmesh}, \cite{Sakurai}) which overlap with representation theory and the theory of Young tableau and representations of the symmetric group (see, e.g., \cite{Boh}). For instance, the Clebsch-Gordan coefficients described in \cite{Hamarmesh} (in the table on pg. 375), give one possible change of coordinates   to obtain the bases of such invariant subspaces. Another method is given by the use of Young symmetrizers which was reviewed in \cite{Sym3} in the form that uses {\it Hermitian} Young symmetrizers as described in \cite{Zeili}. 
It must be stressed however that such a decomposition in invariant subspaces is not unique and it is of interest to find {\it all} the possible decompositions. In order to achieve this goal we will  use in this section a technique that was described in \cite{Mikobook} (see Chapter 4 section 4.3.4).
One considers the {\it commutant} ${\cal C}$ of $su^{S_3}(8)$ in $u(8)$ which is a reductive Lie algebra and therefore it admits Cartan subalgebras, i.e., a maximal Abelian subalgebra. 
The main observation is that, if $W \oplus V_1 \oplus V_2$ (with $\dim(W)=4$ and $\dim(V_1)=\dim(V_2)=2$)  is a decomposition in invariant subspaces for $su^{S_3}(8)$, then ${\cal C}$ admits a Cartan subalgebra which,  in the appropriate coordinates,  has a basis given by $A_1:=\texttt{diag}(i{\bf 1}_4, {\bf 0}_2, {\bf 0}_2)$, $A_2:=
\texttt{diag}({\bf 0}_4,i{\bf 1}_2, {\bf 0}_2)$,    $A_3:=
\texttt{diag}({\bf 0}_4, {\bf 0}_2, i{\bf 1}_2)$, where ${\bf 1}_r$ (${\bf 0}_r$) is the $r \times r$ identity (zero) matrix, and $\texttt{diag}$ here refers to block diagonal matrices.  Possible decompositions are therefore in correspondence with Cartan subalgebras of the commutant ${\cal C}$. Thus,  a method to obtain all the possible decompositions is the following algorithm. 
\begin{enumerate}
\item Compute the commutant ${\cal C}$ of $su^{S_3}(8)$ in $u(8)$. 
\item Find all possible Cartan subalgebras of ${\cal C}$ which (in this case) all have dimension $3$. 
The following steps refers to the a Cartan subalgebra ${\cal A}$. In order to deal with Hermitian matrices rather than skew-Hermitian ones, we consider $i{\cal A}$.  
\item Take  a basis of $i{\cal A}$ and (orthogonally) diagonalize its  elements  simultaneously (this is possible since these are mutually commuting Hermitian matrices). 
\item Place the elements on the diagonal in three row vectors so as to form a $3 \times 8$ matrix, which we shall denote by $M$.

\item Perform a Gaussian row reduction algorithm to place  the matrix in a Reduced Row Echelon Form (see, e.g., \cite{LinAlg}). This corresponds to taking linear combinations of the matrices in the basis if $i{\cal A}$ so as to obtain a new basis of elements which only have eigenvalues $1$ and $0$. Call these elements (in the original coordinates) 
$(\tilde A_1, \tilde A_2, \tilde A_3)$ with $\tilde A_1$ having eigenvalue $1$ with multiplicity 4, $\tilde A_2$ and $\tilde A_3$ having eigenvalue $1$ with multiplicity 2 

\item $W$ is the eigenspace of $\tilde A_1$ corresponding to eigenvalue $1$. $V_1$ and $V_2$ are the eigenspaces of $\tilde A_2$ and $\tilde A_3$, respectively, corresponding to eigenvalue $1$. 
Notice that once we know $W$ and $V_1$, the subspace $V_2$ is simply  the orthogonal complement. 

\end{enumerate}

Let us carry out the above program for our example.  The commutant ${\cal C}$ is found by solving the linear system of equations $[{\cal C}, B_j]=0$, where $\{B_j\}$ is a basis of $su^{S_3}(8)$. In fact, since the matrices $iH_x$, $iH_y$ and $iH_{zz}$, with (cf. (\ref{PauliMat}))
\be{Hxy}
H_{x,y,z}:=\sigma_{x,y,z} \otimes {\bf 1} \otimes {\bf 1}+ {\bf 1} \otimes \sigma_{x,y,z} \otimes {\bf 1} + {\bf 1} \otimes {\bf 1} \otimes \sigma_{x,y,z}, 
\ee 
\be{Hzz}
H_{zz}:=\sigma_z \otimes \sigma_z \otimes {\bf 1} + {\bf 1} \otimes \sigma_z \otimes \sigma_z + \sigma_z \otimes {\bf 1} \otimes \sigma_z,  
\ee
 generate all of $su^{S_3}(8)$  
\cite{noisym1}, it is enough to solve 
$$
[{\cal C}, H_x]=0, \qquad [{\cal C}, H_y]=0, \qquad  [{\cal C}, H_{zz}]=0. 
$$
This computation,  which was done in \cite{Mikobook}, leads to the  basis 
$\{E_1,E_2,E_3,E_4, E_5\}$ for ${\cal C}$, with  
$$
iE_1:={\bf 1} \otimes {\bf 1} \otimes {\bf 1},$$
$$ iE_2:=\sigma_x \otimes {\bf 1} \otimes \sigma_x+
\sigma_y \otimes {\bf 1} \otimes \sigma_y+\sigma_z \otimes {\bf 1} \otimes \sigma_z,$$ 
$$ 
iE_3=\sigma_x \otimes \sigma_x \otimes {\bf 1} + \sigma_y \otimes \sigma_y \otimes {\bf 1} + 
 \sigma_z \otimes \sigma_z \otimes {\bf 1}, $$
$$
i E_4={\bf 1} \otimes \sigma_x \otimes \sigma_x+{\bf 1} \otimes \sigma_y \otimes \sigma_y + 
{\bf 1} \otimes \sigma_z \otimes \sigma_z,$$ 
 $$ iE_5=\sigma_x \otimes 
(\sigma_y \otimes \sigma_z-\sigma_z \otimes \sigma_y)+\sigma_y \otimes 
(\sigma_z \otimes \sigma_x-\sigma_x \otimes \sigma_z)+ \sigma_z \otimes 
(\sigma_x \otimes \sigma_y-\sigma_y \otimes \sigma_x). 
$$
An analysis of the Lie algebra ${\cal C}$ shows that it is the direct sum of one  two dimensional Abelian Lie algebra spanned by $E_1$ and $E_1+E_2+E_3$ and a three dimensional Lie algebra isomorphic to $su(2)$ and spanned by $\{E_5, (E_2-E_3), (E_2-E_4)\}$. The Lie algebra $su(2)$ has a one dimensional Cartan subalgebra which may be spanned by any non zero element. Therefore a general Cartan subalgebra ${\cal A}$ of ${\cal C}$ is such that an orthogonal basis of $i{\cal A}$ is given by $\{F_1, F_2,F_3\}$, with 
\be{F1F2F3}
F_1:=i E_1, \qquad F_2:=i(E_2+E_3+E_4), \qquad F_3:=aiE_5+bi(E_2-E_3)+ci(E_2-E_4), 
\ee 
for any, not simultaneously zero, real parameters $(a,b,c)$.

We now proceed to step $3$ of the above algorithm. The matrix $F_1$ is just the identity matrix which is diagonal in every basis. The matrix $F_2$ has eigenvalue $\lambda_1=3$ with eigenspace $Q_3$  spanned by $\{ \vec e_1, \vec e_2+ \vec e_3 +\vec e_5, \vec e_4+ \vec e_6 + \vec e_7, \vec e_8\}$ and eigenvalue $\lambda_2=-3$ with eigenspace $Q_{-3}$ spanned by $\{ \vec f_1, \vec f_2, \vec f_3, \vec f_4\}$ , with 
$\vec f_1:=\vec e_3 - \vec e_2,$ $\vec f_2:= \vec e_2 + \vec e_3 - 2 \vec e_5$, $ \vec f_3:= \vec e_6 - \vec e_4,$ $ \vec f_4:= -2 \vec e_7 + \vec e_6 +\vec e_4 $. Consider now $F_3$ acting on $Q_3$ and $Q_{-3}$. Direct verification shows that $F_3$ is zero on $Q_3$. On $Q_{-3}$ we have 
$$
F_3 \vec f_1=3c\vec f_1+ (2ia+2b+c) \vec f_2, 
$$
$$
F_3 \vec f_2=(3c+6b+6ia)\vec f_1-3c \vec f_2,  
$$
$$
F_3 \vec f_3=3b \vec f_3+(b+2c+2ia)\vec f_4,  
$$
$$
F_3 \vec f_4=(3b+6c-6ia)\vec f_3-3b \vec f_4. 
$$
which shows that the subspace $Q_{-3}$ splits into two invariant subspaces for $F_3$ 
spanned by $\{\vec f_1, \vec f_2\}$ and $\{ \vec f_3, \vec f_4\}$ respectively. Calculating the spectrum of $F_3$ on such subspaces we find that $F_3$ has eigenvalues $\pm \lambda$ on both  subspaces where 
\be{lambda}
\lambda:=2\sqrt{3c^2+3cb+3b^2+3a^2}. 
\ee
Notice that $\lambda$ is never zero otherwise we would have $a=b=c=0$ which we have excluded. Listing the eigenvalues of $F_1$, $F_2$ and $F_3$ and constructing the $M$-matrix of the above algorithm, we have 
$$
M=\begin{pmatrix} 1 & 1 &1&1&1&1&1&1 \cr 3 & 3 &3 &3 &-3&-3&-3&-3 \cr 0 & 0 & 0 & 0 & \lambda & \lambda & -\lambda & -\lambda  \end{pmatrix}. 
$$
Row reduction to transform this matrix in its Reduced Row Echelon Form which is 
$$
RREF(M)=\begin{pmatrix}  1 & 1 & 1 & 1 & 0 & 0 & 0 & 0 \cr 
0 & 0  & 0  & 0  & 1 & 1 & 0 & 0 \cr 
0  & 0  & 0  & 0  & 0 & 0 & 1 & 1
\end{pmatrix}, 
$$
corresponds to multiplication of $M$ on the left by the matrix 
$$
R:=\frac{1}{2} \begin{pmatrix} 1 & \frac{1}{3} & 0 \cr \frac{1}{2} & -\frac{1}{6} & \frac{1}{\lambda} \cr 
\frac{1}{2} & -\frac{1}{6} & -\frac{1}{\lambda}  \end{pmatrix}. 
$$
The rows of these  matrices give the coefficients for the linear combinations of $\{F_1,F_2, F_3\}$ whose eigenspaces are the sought vector spaces. In particular consider 
\be{Pi1ref}
\Pi_1:=\frac{1}{2}\left(F_1+\frac{1}{3}F_2\right), 
\ee
\be{Pi2}
\Pi_2:=\frac{1}{2}\left( \frac{1}{2} F_1-\frac{1}{6}F_2+\frac{1}{\lambda} F_3 \right), 
\ee
\be{Pi3}
\Pi_3:=\frac{1}{2}\left( \frac{1}{2} F_1-\frac{1}{6}F_2-\frac{1}{\lambda} F_3 \right).  
\ee
All these projections form a  complete set of, symmetric, mutually orthogonal idempotents (which are also 
called {\it generalized Young symmetrizers} \cite{Sym3}). In fact, one can verify directly that 
\be{completeness}
\Pi_1+\Pi_2+\Pi_3={\bf 1}, 
\ee
\be{Ortho}
\Pi_{j} \Pi_k=\delta_{j,k} \Pi_j. 
\ee
The eigenspaces corresponding to the eigenvalue $1$ of these matrices are the spaces $W$ (for $\Pi_1$), $V_1$ (for $\Pi_2$), $V_2$ (for $\Pi_3$). They coincide with the images of these matrices. 
The result is explicitly given in the following theorem. 

\begin{theorem}\label{decotheo}
Every decomposition of ${\cal H}:={\cal H}_2^{\otimes 3}=W \oplus V_1 \oplus V_2$ in invariant 
subspaces for $su^{S_3}(8)$ ($SU^{S_3}(8)$) or  $u^{S_3}(8)$ ($U^{S_3}(8)$)  
 corresponds to a triple $(a,b,c) \neq (0,0,0)$. 
An orthogonal  basis of $W$ is given by 
\be{W}
{\cal W}=\{ \vec e_1, \vec e_2+\vec e_3+\vec e_5, \vec e_4+ \vec e_6+\vec e_7, \vec e_8 \}, 
\ee
which coincides with the basis $\{\phi_0, \phi_1, \phi_2, \phi_3 \}$ in (\ref{earray}) and uniquely determines $W$ which coincides with the symmetric sector.  
An orthogonal basis of $V_1$ is given by $\{ |v_1 \rangle , |w_1\rangle \}$ with 
\be{V1}
\begin{array}{ccl}
|v_1\rangle &= & x_2|001\rangle+x_3|010\rangle+x_5|100\rangle \\
|w_1\rangle &= & x_4|011\rangle+x_6|101\rangle+x_7|110\rangle 
\end{array}
\ee
with $x_2:=\frac{5}{3} \lambda -6b-2ia-2c$, $x_3:=-\frac{\lambda}{3} + 6ia -4c+2b$, 
$x_5=-(x_2+x_3)=-\frac{4}{3} \lambda - 4ia+ 4b+6c$,  $x_4=\frac{5}{3} \lambda -6c+2ia-2b$,   
$x_6=-\frac{\lambda}{3} -6ia-4b+2c$, $
x_7=-(x_4+x_6)= -\frac{4}{3} \lambda + 4ia + 4c +6b$. 
An orthogonal  basis of $V_2$ is given by 
\be{V2}
\begin{array}{ccl}
|v_2\rangle &= & y_2|001\rangle+y_3|010\rangle+y_5|100\rangle \\
|w_2\rangle &= & y_4|011\rangle+y_6|101\rangle+y_7|110\rangle 
\end{array}
\ee
with $y_2=\frac{5 \lambda }{3}+6b+2ia+2c$, $y_3=-\frac{\lambda}{3}-6ia+4c-2b$, $y_5=-(y_2+y_3)= -\frac{4}{3} \lambda -4b-6c+4ia$, $y_4=\frac{5}{3} \lambda+6c-2ia+2b$, $y_6=- \frac{\lambda}{3}+6ia+4b-2c$, $y_7=-(y_4+y_6)=-\frac{4}{3} \lambda-4c -4ia -6b$. 
\end{theorem}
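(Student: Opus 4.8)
The plan is to assemble the theorem from the material already developed: the structure of the commutant $\mathcal{C}$, its Cartan subalgebras (\ref{F1F2F3}), and the idempotents (\ref{Pi1ref})--(\ref{Pi3}). I would first settle the parametrization statement. Given a decomposition $\mathcal{H}=W\oplus V_1\oplus V_2$ into invariant subspaces, which we may take mutually orthogonal since the representation is unitary, the three orthogonal projections $P_W,P_{V_1},P_{V_2}$ are Hermitian, mutually orthogonal idempotents commuting with every element of $su^{S_3}(8)$; hence $\{iP_W,iP_{V_1},iP_{V_2}\}$ spans a three-dimensional abelian subalgebra of $\mathcal{C}$. As $\mathcal{C}$ is reductive of rank $3$, any three-dimensional abelian subalgebra is maximal abelian and therefore a Cartan subalgebra, which is the main observation recalled before the theorem. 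Conversely, because $F_1,F_2,F_3\in\mathcal{C}$, the common eigenspaces of these commuting Hermitian matrices are invariant under $su^{S_3}(8)$, and the reduced row echelon construction packages them into a decomposition of the prescribed type $(4,2,2)$. Since every Cartan subalgebra has $i\mathcal{A}$ of the form (\ref{F1F2F3}) -- the part $\mathrm{span}\{F_1,F_2\}$ being common to all of them while $F_3$ runs over the lines of the $su(2)$ summand -- the decompositions are exactly labelled by triples $(a,b,c)\neq(0,0,0)$, with rescaling of $(a,b,c)$ leaving the eigenspaces fixed and the sign change $(a,b,c)\mapsto-(a,b,c)$ interchanging $V_1$ and $V_2$.

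Next I would show that $W$ does not depend on $(a,b,c)$. The projection $\Pi_1$ in (\ref{Pi1ref}) is built from $F_1$ and $F_2$ alone, neither of which involves $(a,b,c)$, so $W=\mathrm{Im}\,\Pi_1$ is one and the same subspace for every decomposition. Evaluating $\Pi_1=\frac12\bigl(F_1+\frac13F_2\bigr)$ on the eigenspaces of $F_2$ gives the identity on $Q_3$ (where $F_1=1$, $F_2=3$) and zero on $Q_{-3}$ (where $F_2=-3$), so $W=Q_3$. Reading the spanning set (\ref{W}) in the computational basis then matches it term by term with $\{\phi_0,\phi_1,\phi_2,\phi_3\}$ of (\ref{earray}), identifying $W$ with the symmetric sector.

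For the explicit bases (\ref{V1}) and (\ref{V2}) I would argue as follows. Since $F_3$ vanishes on $Q_3$, the projections $\Pi_2,\Pi_3$ annihilate $Q_3$, so $V_1,V_2\subset Q_{-3}$; and on $Q_{-3}$ the formulas (\ref{Pi2})--(\ref{Pi3}) exhibit $\Pi_2$ and $\Pi_3$ as the spectral projections of $F_3$ for the eigenvalues $+\lambda$ and $-\lambda$ from (\ref{lambda}). Using the already established splitting $Q_{-3}=\mathrm{span}\{\vec f_1,\vec f_2\}\oplus\mathrm{span}\{\vec f_3,\vec f_4\}$ into $F_3$-invariant blocks, I would diagonalize the two $2\times2$ matrices given by the displayed action of $F_3$ on the $\vec f_i$. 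The $+\lambda$-eigenvector of the first block, rewritten through $\vec f_1=\vec e_3-\vec e_2$ and $\vec f_2=\vec e_2+\vec e_3-2\vec e_5$, produces $|v_1\rangle$ supported on $|001\rangle,|010\rangle,|100\rangle$, and that of the second block produces $|w_1\rangle$ supported on $|011\rangle,|101\rangle,|110\rangle$; the $-\lambda$-eigenvectors give $|v_2\rangle,|w_2\rangle$. Orthogonality is then automatic: $|v_i\rangle$ and $|w_i\rangle$ have disjoint computational supports, while $V_1\perp V_2$ because they are eigenspaces of the Hermitian $F_3$ for distinct eigenvalues.

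I expect the main obstacle to be the explicit eigenvector computation of the last step together with its normalization. The two two-dimensional summands carry isomorphic copies of the spin-$\frac12$ representation, so within each $2\times2$ block the eigenvector, hence the coefficients $x_2,\dots,x_7$ and $y_2,\dots,y_7$, is determined only up to scale; one must fix this scale consistently and carry the surd $\lambda$ through the substitution in order to reach precisely the stated expressions. The one conceptual point I would be careful not to skip is the completeness half of the parametrization -- that every invariant decomposition, not only those we construct by hand, comes from a Cartan subalgebra of $\mathcal{C}$ -- which relies on the reductivity of $\mathcal{C}$ and on maximal abelian subalgebras coinciding with Cartan subalgebras.
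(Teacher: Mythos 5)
Your proposal is correct and follows essentially the same route as the paper: the correspondence with Cartan subalgebras of the commutant is exactly the paper's ``main observation'' preceding its algorithm, and extracting the bases as $\pm\lambda$-eigenvectors of $F_3$ on $Q_{-3}$ is the same computation as the paper's reading-off of the column spaces of the explicit matrices $\Pi_2,\Pi_3$, since on $Q_{-3}$ one has $\Pi_2=\frac{1}{2}\bigl(1+\frac{1}{\lambda}F_3\bigr)$. The only soft spot, shared with the paper, is the tacit restriction to orthogonal decompositions when identifying $P_W,P_{V_1},P_{V_2}$ with Hermitian elements of $\mathcal{C}$: orthogonality of $W$ against $V_1\oplus V_2$ is automatic (distinct isotypic components), but a non-orthogonal splitting of the two-dimensional isotypic component into $V_1\oplus V_2$ would yield non-Hermitian idempotents and is not captured by real triples $(a,b,c)$.
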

A direct computation using (\ref{lambda}) shows that the decomposition is orthogonal. 
\begin{proof}
The theorem follows by explicitly writing the matrices $\Pi_1$, $\Pi_2$ and $\Pi_3$ in (\ref{Pi1ref}), (\ref{Pi2}) (\ref{Pi3}). 

\begin{color}{black}

The matrix $\Pi_1$ is the following:
$$
\Pi_1:=\begin{pmatrix}    1& 0 & 0& 0 &0 & 0 & 0 & 0 \cr
0 & \frac{1}{3} & \frac{1}{3} & 0  & \frac{1}{3}&0 & 0 & 0   \cr 
0 & \frac{1}{3} & \frac{1}{3} & 0  & \frac{1}{3}& 0 & 0 & 0  \cr 
0 & 0 & 0 &  \frac{1}{3} & 0& \frac{1}{3} &  \frac{1}{3} &0 \cr 
0 & \frac{1}{3} & \frac{1}{3} & 0  & \frac{1}{3}   & 0 & 0 & 0 \cr 
0 & 0 & 0 &  \frac{1}{3} & 0& \frac{1}{3} &  \frac{1}{3} &0  \cr 
0 & 0 & 0 &  \frac{1}{3} & 0& \frac{1}{3} &  \frac{1}{3} &0  \cr 
0 & 0 & 0 & 0  & 0 & 0 & 0 & 1
 \end{pmatrix} 
$$
Thus the orthogonal basis for the subspace $W$, is the one given by equation (\ref{W}).

\end{color}

The work for the matrices $\Pi_2$ and $\Pi_3$ which depend on the parameters $a$, $b$, and $c$, requires some extra considerations. Let us consider  the discussion for $\Pi_2$ and $V_1$. 
The matrix $\Pi_2$ in (\ref{Pi2}) is $\Pi_2:=\frac{1}{12\lambda}(\Pi_{2,1}, \Pi_{2,2}) $ with 
{\footnotesize{
$$
\Pi_{2,1}:=\begin{pmatrix}    0 & 0 & 0& 0 \cr
0 & 4 \lambda-12b & -2 \lambda -12ia- 12c & 0   \cr 
0 & -2 \lambda + 12 i a - 12 c & 4 \lambda +12 b + 12 c & 0 
 \cr 
0 & 0 & 0 & 4 \lambda -12 c  \cr 
0 & - 2 \lambda -12 i a + 12 b +12 c & - 2 \lambda + 12 i a - 12 b & 0 \cr 
0 & 0 & 0 & -2 \lambda -12 ia -12 b 
 \cr 
0 & 0 & 0 & - 2 \lambda + 12 i a + 12 b + 12 c \cr 
0 & 0 & 0 & 0 
 \end{pmatrix} 
$$
$$
\Pi_{2,2}:=\begin{pmatrix}     0& 0 & 0& 0 \cr
 -2\lambda +12 ia + 12 b +12 c & 0 &  0 & 0 \cr 
 -2\lambda -12ia -12 b & 0 & 0 & 0 \cr 
 0 & -2 \lambda + 12 i a - 12 b & - 2 \lambda - 12 ia + 12 b+12 c  & 0 \cr 
 4 \lambda -12 c & 0 & 0 & 0 \cr 
 0 & 4 \lambda +12 b + 12 c & - 2 \lambda + 12 i a - 12 c & 0 \cr 
 0 & -2 \lambda - 12 ia -12 c & 4 \lambda -12 b & 0 \cr 
 0 & 0 & 0 & 0 
 \end{pmatrix}  
$$}}

Considering the columns $2,3,$ and $5$ of $\Pi_2$,  one sees that the sum of second, third and fifth row is zero. Therefore at the most two of these columns are linearly independent. In fact using the definition of $\lambda$ in (\ref{lambda}) it follows that only one column is linearly independent. 

Taking the first one multiplied by ${6  \lambda}$ plus the second one multiplied by $2 \lambda$  one obtains the first element in the (orthogonal) basis of ${\cal V}_1$.  This choice is made because this vector is never zero if $(a,b,c) \ne 0$. \footnote{Set all the components equal to zero. This gives $a=0$, and multiplying the  component along $\vec e_3$  by $5$ and summing the $\vec e_2$-component and the $\vec e_3$-component multiplied by $5$ one obtains $b=\frac{11}{2} c$. Plugging this in the $\vec e_2$-component (set to zero), one gets $\sqrt{199c^2}=21c$ which implies $c=0$. However this would imply $b=0$ as well, which along with $a=0$, gives a contradiction.} Analogously one obtains the second vector of the basis. In fact the form of this vector can be found by noticing that the coefficients of $\vec e_4$, $\vec e_6$ and $\vec e_7$ can be found by the ones of $\vec e_2$, $\vec e_3$ and $\vec e_5$, by exchanging 
$b \leftrightarrow c$ $a \leftrightarrow -a$.

The discussion of $\Pi_3$ and ${\cal V}_2$ is analogous. In fact, an explicit calculation shows that $\Pi_3$ can be obtained from  $\Pi_2$ with the exchanges  $a \leftrightarrow -a$, $b \leftrightarrow -b$, $c \leftrightarrow -c $. \end{proof}

This decomposition includes, as special cases, decompositions found in the standard  quantum physics literature. 
 For examples, the two dimensional space obtained with the Young symmetrizers in
 \cite{Sym3},  which can also be obtained with the recursive use of the Clebsch-Gordan coefficients \cite{Hamarmesh}, is  spanned by 
$$
\hat \psi_1:=\frac{1}{\sqrt{2}} |0 10 \rangle-  \frac{1}{\sqrt{2}} |1 00\rangle:= \frac{1}{\sqrt{2}} \vec e_3 - \frac{1}{\sqrt{2}} \vec e_5 
$$ 
$$
\hat \psi_2:=-\frac{1}{\sqrt{2}} |011\rangle  + \frac{1}{\sqrt{2}} |101\rangle   =-\frac{1}{\sqrt{2}} \vec e_4 + \frac{1}{\sqrt{2}} \vec e_6 , 
$$
and it is obtained as a special case of $V_1$ by choosing $a=0$. $b=\frac{1}{3 \sqrt{2}}$, $c=-\frac{1}{6 \sqrt{2}}$ (which give $\lambda=\frac{1}{\sqrt{2}}$).\footnote{This choice gives $\hat \psi_1$ for the first element of the basis ${\cal V}_1$ and a multiple of $\hat \psi_2$ for the second element. Notice that the elements of the bases we describe are not necessarily normalized.}

\begin{remark}\label{later}
Consider the basis $\{|v_1 \rangle, |w_1 \rangle \}$ of $V_1$ (similar consideration can be done for $V_2$) and assume $x_6=0$. Using the definition of $\lambda$ (\ref{lambda}), this gives, beside $a=0$, 
$$
\sqrt{3b^2+3c^2+3bc}=3c-6b.
$$
Squaring both terms and solving the corresponding quadratic equation we obtain that if $x_6=0$ $c=b$ and $b < 0$ or $c=\frac{11}{2} b$ and $b > 0$. The second case gives $x_7=0$ which also gives $x_4=0$ which is not possible since it would make one of the basis vectors equal to zero. Therefore not all the values of $(a,b,c)$ are possible. In the case $c=b<0$ we have besides $x_6=0$  also $x_3=0$.  
Assume now $x_7=0$. A similar reasoning as above leads to conclude that the only possibilities are $a=0$ and $c=\frac{11}{2}b$ and $c=-\frac{b}{2}$, in both cases, $b>0$. The 
case $c=\frac{11}{2}b$ is however not possible because, as we have seen, this would imply $x_6=0$.  On the other hand, the case $c=-\frac{b}{2}$ with $b>0$ also gives $x_2=0$. {\color{black} Analogous reasoning shows  that $x_4=0$ implies $x_5=0$. In fact one has $x_4=0 \leftrightarrow x_5=0$, $x_6=0 \leftrightarrow x_3=0$, $x_7=0 \leftrightarrow x_2=0$.  We shall use these considerations  in section \ref{E2S}.}  

\end{remark}

\section{Measures of Entanglement for General Three Qubits States}\label{entgen}

For a general multi-partite quantum system, a measure of 
  entanglement is a nonnegative real function on  the space of density matrices  which satisfies certain axioms. In particular it does not increase under local operations and classical communication (LOCC), it is zero on separable states (that is, statistical mixtures  of product states), it is unchanged by local unitary operations, and it is usually normalized to one (cf., e.g.,  \cite{ZK} and \cite{Hororev}   for a detailed introduction to entanglement measures). For the case of two qubits $A$ and $B$, a very common measure is the {\it concurrence} \cite{Wootters} whose square is called the 2-tangle, $\tau_{AB}$. This can be defined from the density matrix $\rho_{AB}$, by calculating the spectrum of 
  $\rho_{AB}\sigma_y \otimes \sigma_y \rho^*_{AB} \sigma_y \otimes \sigma_y $\footnote{\begin{color}{black}Given a matrix $A$ or a complex constant $c$, we always denote by $A^*$ and by $c^*$ the complex conjugate.\end{color}}
 which can be shown to be made of real and nonnegative values $\lambda_1^2 \geq \lambda_2^2 \geq \lambda_3^2 \geq \lambda_4^2$, and $\tau_{AB}$ is defined as  
  \be{tab}
\tau_{AB}=\left[ \max \{ \lambda_1-\lambda_2 -\lambda_3 -\lambda_4, 0\} \right]^2.  
\ee
Consider now a pure state where  three qubits ($A$, $B$, and $C$)  are present, which
 is the case of this paper.  One can consider after tracing out $C$, the entanglement between $A$ and $B$, $\tau_{AB}$, and analogously $\tau_{AC}$ and $\tau_{BC}$. In this case a monogamy relation holds \cite{Monog1} \cite{Monog2}: If $A$ is fully entangled with $B$, that is $\tau_{AB}=1$ then we must have $\tau_{AC}=0$, that is, the state $\rho_{AC}$ is separable
\begin{color}{black}(where $\rho_{AC}$ is the partial trace with respect to 
subsystem  $B$)\end{color}. In fact, a more refined inequality holds \cite{Coff}. Consider a pure state $\rho$ and consider the system as a bipartite system $A-(BC)$. Even though $BC$ is four dimensional, it follows from the Schmidt decomposition (see, e.g., \cite{NC} pg. 109) that only two (orthogonal) directions are necessary to express the full state. Therefore, we can treat effectively $(BC)$ as a two level system and define the entanglement $\tau_{A(BC)}$ between $A$ and $(BC)$. Then one has the following inequality which was one of the main results of \cite{Coff} 
\be{inecoff}
\tau_{AB}+ \tau_{AC} \leq \tau_{A(BC)}. 
\ee
The difference between $\tau_{A(BC)}$ and $\tau_{AB}+ \tau_{AC} $ is by definition, the {\it distributed entanglement} or $3-$tangle, which we denote simply by $\tau$, that is, the amount of entanglement not due to pairwise entanglement between the quantum bits.  Explicit formulas were given in \cite{Coff} for  $\tau_{A(BC)}$ and $\tau$. We report them below because we shall use them in our analysis. 
\begin{color}{black} Let $\rho_A$  be the partial trace with respect to 
subsystems $B$ and $C$.
\end{color}
\be{explittabc}
\tau_{A(BC)}=4\det(\rho_A); 
\ee
\be{explict}
\tau=\tau_{A(BC)}-\tau_{AB} -\tau_{AC}= 4\left|t_{000}^2t_{111}^2+t_{001}^2t_{110}^2+t_{010}^2t_{101}^2+t_{100}^2t_{011}^2-2d_1+4d_2\right|, 
\ee
with 
$$
d_1:=t_{000}t_{111}t_{011}t_{100}+t_{000}t_{111}t_{101}t_{010}+t_{000}t_{111}t_{110}t_{001}+
$$
$$
t_{011}t_{100}t_{101}t_{010}+t_{011}t_{100}t_{110}t_{001}+t_{101}t_{010}t_{110}t_{001}, 
$$
$$
d_2:=t_{000}t_{110}t_{101}t_{011}+t_{111}t_{001}t_{010}t_{100}, 
$$
  for a state 
  $$
  |\psi\rangle =\sum_{ijk}t_{ijk}|ijk\rangle, 
  $$
  with $ijk=0,1$.

\section{States in the Symmetric Sector and their Entanglement}\label{ESS}


\begin{color}{black}
For any state in the symmetric sector, because of symmetry, we have  $\rho_A=\rho_B=\rho_C$, so 
for these states, $\tau_{AB}=\tau_{AC}=\tau_{BC}$ and therefore we have, using (\ref{explittabc}) (\ref{explict}),
\end{color}
\be{TABTAC}
\tau_{AB}=\tau_{AC}=\tau_{BC}= \frac{4\det (\rho_A)-\tau}{2}. 
\ee 


In order to express the entanglement measures $\tau$ and  $\tau_{AB}=\tau_{AC}=\tau_{BC}$ in a compact 
fashion, we introduce an extra piece of notation. Define 
\be{X2X3X4}
X_2:=c_0c_2-c_1^2, \qquad X_3:=c_0 c_3-c_1 c_2, \qquad X_4:=c_1 c_3-c_2^2. 
\ee

The quantities $X_2$, $X_3$ and $X_4$ give a quick test of separability for states in the symmetric sector as described in the following proposition. 
\begin{proposition}\label{separ3} A state $\psi$  (\ref{psibas})  in the symmetric sector 
is separable if and only if 
\be{allzer}
X_2=X_3=X_4=0. 
\ee
In this case $\psi$ is a symmetric product state of the form 
\be{formasep}
\psi=\phi \otimes \phi \otimes \phi, 
\ee
with $\phi$ a one qubit state. 
\end{proposition}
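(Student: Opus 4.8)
The plan is to read off the three quantities in (\ref{X2X3X4}) as the $2\times 2$ minors of the Hankel matrix
\[
H:=\begin{pmatrix} c_0 & c_1 & c_2 \\ c_1 & c_2 & c_3 \end{pmatrix},
\]
so that the condition (\ref{allzer}), $X_2=X_3=X_4=0$, is precisely the statement $\operatorname{rank}(H)\le 1$. This reduces the proposition to a statement about when the coefficient vector $(c_0,c_1,c_2,c_3)$ lies on the twisted cubic $(\alpha:\beta)\mapsto(\alpha^3:\alpha^2\beta:\alpha\beta^2:\beta^3)$; equivalently (this is the Majorana picture recalled later), when the associated binary cubic has a triple root.

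For the implication (\ref{allzer}) $\Rightarrow$ separability, I would argue as follows. If $\operatorname{rank}(H)\le 1$ then $H$ has a nonzero left null vector $(\mu,\nu)$, which yields the recurrence $\mu c_j+\nu c_{j+1}=0$ for $j=0,1,2$. If $\nu=0$ this forces $c_0=c_1=c_2=0$, so $\psi\propto|1\rangle^{\otimes 3}$; if $\nu\neq 0$ the $c_j$ form a geometric progression (necessarily with $c_0\neq 0$, else the state is zero), and writing the common ratio as $\beta/\alpha$ gives $(c_0,c_1,c_2,c_3)$ proportional to $(\alpha^3,\alpha^2\beta,\alpha\beta^2,\beta^3)$. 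Comparing with the expansion of $(\alpha|0\rangle+\beta|1\rangle)^{\otimes 3}$ in the basis (\ref{earray}) shows that in every case $\psi=\phi^{\otimes 3}$ with $\phi=\alpha|0\rangle+\beta|1\rangle$, which is manifestly separable and exactly of the form (\ref{formasep}). This single computation proves separability and simultaneously identifies the product form.

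For the converse, separability $\Rightarrow$ (\ref{allzer}), I would first upgrade ``separable'' to ``symmetric product.'' A separable pure state factors as $|\phi_A\rangle\otimes|\phi_B\rangle\otimes|\phi_C\rangle$; since $\psi$ lies in the symmetric sector it is fixed by every element of $S_3$, and applying the transpositions forces $|\phi_A\rangle,|\phi_B\rangle,|\phi_C\rangle$ to be pairwise proportional, using that $u\otimes v=v\otimes u$ for nonzero product vectors implies $u,v$ are proportional. Hence $\psi=\phi^{\otimes 3}$, and substituting $(c_0,c_1,c_2,c_3)=(\alpha^3,\alpha^2\beta,\alpha\beta^2,\beta^3)$ into (\ref{X2X3X4}) gives $X_2=X_3=X_4=0$ by direct cancellation. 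A second route avoids the permutation argument entirely: separability gives $\tau_{A(BC)}=4\det(\rho_A)=0$ via (\ref{explittabc}), and since the symmetry yields $\rho_A=\rho_B=\rho_C$ every single-qubit marginal is pure, which again forces the full product form.

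The main obstacle I expect is not conceptual depth but careful case bookkeeping: I must confirm that the rank-one argument genuinely covers the boundary cases $\alpha=0$ (the $\nu=0$ branch, giving $|1\rangle^{\otimes 3}$) and $\beta=0$ (ratio zero), and I must justify cleanly that $S_3$-invariance promotes ``product'' to ``equal factors,'' the sharpest tool being the observation that $u\otimes v=v\otimes u$ fails whenever $u,v$ are linearly independent. Once these cases are dispatched, both directions follow from the geometric-progression characterization of the vanishing of the Hankel minors.
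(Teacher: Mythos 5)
Your proof is correct, and at its core it is the same argument as the paper's: both directions come down to the observation that $X_2=X_3=X_4=0$ says exactly that $(c_0,c_1,c_2,c_3)$ is a (possibly degenerate) geometric progression, i.e.\ proportional to $(\alpha^3,\alpha^2\beta,\alpha\beta^2,\beta^3)$, which is the coefficient vector of $(\alpha|0\rangle+\beta|1\rangle)^{\otimes 3}$ in the basis (\ref{earray}). Your Hankel--matrix/twisted--cubic packaging is a clean way of organizing the case analysis the paper does by hand: your $\nu=0$ versus $\nu\neq 0$ split is the paper's $c_0=0$ versus $c_0\neq 0$ split. The one place the routes genuinely diverge is the direction ``separable $\Rightarrow$ (\ref{allzer})'': you first upgrade the product state to one with \emph{equal} factors via the $u\otimes v=v\otimes u$ transposition argument and then substitute the twisted-cubic parametrization, whereas the paper never establishes equal factors --- it exploits the over-determined system $c_1=\alpha_1\beta_0\gamma_0=\alpha_0\beta_1\gamma_0=\alpha_0\beta_0\gamma_1$ forced by membership in the symmetric sector, verifying each minor by a rearrangement such as $c_0c_2=(\alpha_0\beta_0\gamma_0)(\alpha_1\beta_1\gamma_0)=(\alpha_1\beta_0\gamma_0)(\alpha_0\beta_1\gamma_0)=c_1^2$. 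Both are sound; yours has the small advantage that the symmetric product form (\ref{formasep}) emerges for free in both directions. One quibble: your proposed ``second route'' does not actually avoid the permutation argument, since purity of all three single-qubit marginals only re-establishes that $\psi$ is a product of three factors, not that the factors coincide, so you would still need the transposition step (or the paper's direct computation) to finish; as this is offered only as an aside, it does not affect the validity of your main argument.
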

\begin{proof}
Assume that $\psi$ in (\ref{psibas}) is a product state, i.e., 
$$
\psi=(\alpha_0 |0\rangle+\alpha_{1}|1\rangle) \otimes (\beta_0 |0\rangle+\beta_{1}|1\rangle) \otimes 
(\gamma_0 |0\rangle+\gamma_{1}|1\rangle). 
$$
Expanding and comparing with (\ref{psibas}), we have 
$$
c_0=\alpha_0 \beta_0 \gamma_0, 
$$
$$
c_1=\alpha_1 \beta_0 \gamma_0=\alpha_0 \beta_1 \gamma_0=\alpha_0 \beta_0 \gamma_1, 
$$
$$
c_2=\alpha_1 \beta_1 \gamma_0=\alpha_1 \beta_0 \gamma_1=\alpha_0 \beta_1 \gamma_1, 
$$
$$
c_3=\alpha_1 \beta_1 \gamma_1. 
$$
Using these in (\ref{X2X3X4}) one verifies (\ref{allzer}). For example, for $X_2$ we have 
$$
c_0 c_2=\alpha_0 \beta_0 \gamma_0\alpha_1 \beta_1 \gamma_0= (\alpha_1  \beta_0 \gamma_0)(\alpha_0\beta_1 \gamma_0)=c^2_1. 
$$
Viceversa, assume (\ref{allzer}) is verified and consider the state (\ref{psibas}). If $c_0=0$, then, 
from (\ref{allzer}), (\ref{X2X3X4}),  it follows that $c_1=0$ and $c_2=0$.  Therefore the state coincides with $|111\rangle$ which is separable and of the symmetric form (\ref{formasep}). If $c_0 \ne 0$ we can assume $c_0=1$, without loss 
of generality keeping the state not normalized. Conditions (\ref{allzer}) (\ref{X2X3X4})  give $c_2=c_1^2$, $c_3=c_1^3$. Therefore, the state $\psi$ in (\ref{psibas}) is of the form (\ref{formasep}) with $\phi=|0 \rangle+ c_1 |1\rangle$. 
\end{proof}

With the  notation (\ref{X2X3X4}),  the entanglement measures $\tau$ and $\tau_{AB}=\tau_{AC}=\tau_{BC}$ take a compact form as described in the following two propositions. 
\begin{proposition}\label{PO} The distributed entanglement  $\tau$ on the symmetric sector is given by 
\be{explitau}
\tau=4|X_3^2-4X_2 X_4|. 
\ee
\end{proposition}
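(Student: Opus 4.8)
The plan is to specialize the general $3$-tangle formula (\ref{explict}) to a state in the symmetric sector and then recognize the resulting quartic in $c_0,c_1,c_2,c_3$ as the right-hand side of (\ref{explitau}). Since (\ref{explict}) and (\ref{explitau}) are both polynomial identities in the amplitudes, the whole proof is a substitution followed by collecting terms; no appeal to normalization is needed, because both sides are homogeneous of degree $4$.

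First I would read off the amplitudes $t_{ijk}$ of $\psi$ in (\ref{psibas}) from the basis (\ref{earray}). Since $\phi_j$ is the sum of all computational basis vectors with exactly $j$ ones, the amplitude $t_{ijk}$ depends only on the number of ones in the string $ijk$:
$$t_{000}=c_0, \qquad t_{001}=t_{010}=t_{100}=c_1, \qquad t_{011}=t_{101}=t_{110}=c_2, \qquad t_{111}=c_3.$$
Substituting these into the three groups of terms in (\ref{explict}) collapses each group according to the coincidences among amplitudes: the four squared products reduce to $c_0^2c_3^2+3c_1^2c_2^2$; in $d_1$ the three terms carrying the factor $t_{000}t_{111}$ each give $c_0c_1c_2c_3$ while the remaining three give $c_1^2c_2^2$, so $d_1=3c_0c_1c_2c_3+3c_1^2c_2^2$; and $d_2=c_0c_2^3+c_1^3c_3$. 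Assembling $(\text{squares})-2d_1+4d_2$ then yields
$$\frac{\tau}{4}=\left|\,c_0^2c_3^2-3c_1^2c_2^2-6c_0c_1c_2c_3+4c_0c_2^3+4c_1^3c_3\,\right|.$$

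To finish, I would expand $X_3^2-4X_2X_4$ directly from the definitions (\ref{X2X3X4}) and verify that it reproduces exactly the polynomial inside the absolute value above. A useful structural check is that $X_2,X_3,X_4$ are precisely the three maximal minors of the Hankel matrix $\begin{pmatrix} c_0 & c_1 & c_2 \\ c_1 & c_2 & c_3\end{pmatrix}$, so that $X_3^2-4X_2X_4$ is, up to a constant, the discriminant of the associated binary cubic $c_0+3c_1t+3c_2t^2+c_3t^3$; this matches the Majorana-polynomial viewpoint of \cite{Ribeiro} and \cite{Makele}, and it predicts, consistently with Proposition \ref{separ3}, that $\tau$ vanishes exactly when the cubic has a repeated root, i.e.\ on symmetric product states.

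The only real obstacle is bookkeeping: one must track the multiplicities correctly when the distinct $t_{ijk}$ collapse onto $c_0,c_1,c_2,c_3$ (namely which products in $d_1$ occur with multiplicity three versus those reducing to $c_1^2c_2^2$), and then match a six-monomial quartic against the expansion of $X_3^2-4X_2X_4$ without sign or coefficient slips. There is no conceptual difficulty, and the Hankel/discriminant identification above gives an independent sanity check on the final coefficients.
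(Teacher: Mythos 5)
Your proposal is correct and follows essentially the same route as the paper: substitute $t_{ijk}=c_{\#\text{ones}}$ into (\ref{explict}) to get $\tau=4|c_0^2c_3^2-3c_1^2c_2^2-6c_0c_1c_2c_3+4c_0c_2^3+4c_1^3c_3|$, then verify by direct expansion that this equals $4|X_3^2-4X_2X_4|$; all your intermediate coefficients (the $3c_1^2c_2^2$, $d_1=3c_0c_1c_2c_3+3c_1^2c_2^2$, $d_2=c_0c_2^3+c_1^3c_3$) check out. The Hankel-minor/discriminant observation is a nice independent sanity check beyond what the paper records, but it is not a different proof.
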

\begin{proof}
Applying formula (\ref{explict}) we obtain 
\be{fromex}
\tau=4 |c_0^2 c_3^2-3 c_1^2 c_2^2-6c_0 c_1 c_2 c_3 +4c_0 c_2^3+ 4 c_3 c_1^3|. 
\ee
Direct verification using formulas (\ref{X2X3X4}) in (\ref{explitau}) shows that $\tau$ in (\ref{explitau}) coincides with (\ref{fromex}). 
\end{proof}
\begin{proposition}\label{explipar}
The pairwise entanglement $\tau_{AB}=\tau_{AC}=\tau_{BC}$ is given by 
\be{exu}
\tau_{AB}=\tau_{AC}=\tau_{BC}= 2   \left( \det(\rho_A) -|X_3^2-4X_2 X_4| \right), 
\ee
where
\be{detroA}
\det(\rho_A)=|X_3|^2+2 |X_2|^2+2 |X_4|^2. 
\ee
\end{proposition}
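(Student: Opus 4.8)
The first identity in (\ref{exu}) requires no new work: combining the symmetric-sector relation (\ref{TABTAC}) with the expression $\tau = 4|X_3^2 - 4X_2 X_4|$ from Proposition~\ref{PO} gives
$$
\tau_{AB} = \frac{4\det(\rho_A) - \tau}{2} = 2\det(\rho_A) - 2|X_3^2 - 4X_2 X_4| = 2\big(\det(\rho_A) - |X_3^2 - 4X_2 X_4|\big),
$$
so the entire content of the proposition is the determinant formula (\ref{detroA}). The plan is therefore to compute $\det(\rho_A)$ directly and to recognize its right-hand side.

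To set up the computation I would write $|\psi\rangle = \sum_{ijk} t_{ijk}|ijk\rangle$ and record that, by the defining symmetry of the basis (\ref{earray}), the amplitude $t_{ijk}$ depends only on the Hamming weight of $ijk$: one has $t_{000}=c_0$, the three weight-one amplitudes equal $c_1$, the three weight-two amplitudes equal $c_2$, and $t_{111}=c_3$. Collecting these into the $2\times 4$ matrix $T$ whose rows are indexed by the first qubit $i$ and whose columns are indexed by the pair $(j,k)$, the defining relation $\rho_A = \mathrm{Tr}_{BC}|\psi\rangle\langle\psi|$ reads $(\rho_A)_{ii'}=\sum_{jk}t_{ijk}t_{i'jk}^{*}$, that is $\rho_A = T T^{\dagger}$, with
$$
T = \begin{pmatrix} c_0 & c_1 & c_1 & c_2 \\ c_1 & c_2 & c_2 & c_3 \end{pmatrix}.
$$
The salient feature is that the two middle columns coincide.

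The clean way to finish is the Cauchy--Binet formula: since $\rho_A = T T^{\dagger}$ with $T$ of size $2\times 4$,
$$
\det(\rho_A) = \sum_{S} |\det T_S|^2,
$$
the sum running over the six two-element column subsets $S$. Reading off the $2\times 2$ minors of $T$, the subset $\{1,2\}$ (equivalently $\{1,3\}$) gives $c_0 c_2 - c_1^2 = X_2$, the subset $\{1,4\}$ gives $c_0 c_3 - c_1 c_2 = X_3$, the subset $\{2,4\}$ (equivalently $\{3,4\}$) gives $c_1 c_3 - c_2^2 = X_4$, while $\{2,3\}$ contributes $0$ because columns $2$ and $3$ are equal. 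Hence the six terms collapse to $2|X_2|^2 + |X_3|^2 + 2|X_4|^2$, which is exactly (\ref{detroA}); the asymmetric multiplicities $(2,1,2)$ are precisely the bookkeeping consequence of the repeated middle columns. Substituting this into the displayed reduction yields (\ref{exu}) and completes the proof.

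I expect no genuine conceptual obstacle: the only points to get right are the identification $\rho_A = T T^{\dagger}$ and the matching of the column-minors of $T$ with the invariants $X_2,X_3,X_4$ of (\ref{X2X3X4}). If one instead expanded $\det(\rho_A) = (\rho_A)_{00}(\rho_A)_{11} - |(\rho_A)_{01}|^2$ by brute force, the mild nuisance would be tracking the three real-part cross terms arising from $|(\rho_A)_{01}|^2$ and verifying that they reassemble into the cross terms of $|X_2|^2, |X_3|^2, |X_4|^2$; the Cauchy--Binet route is what makes this transparent and explains the weights.
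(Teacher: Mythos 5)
Your proof is correct, and for the substantive part --- the identity (\ref{detroA}) --- it takes a genuinely different route from the paper. The paper's proof writes out $\rho_A$ entrywise (its equation (\ref{roA})), expands $\det(\rho_A)=(\rho_A)_{00}(\rho_A)_{11}-|(\rho_A)_{01}|^2$ by brute force into a twelve-term polynomial in the $c_j$ and their conjugates, and then \emph{verifies} that substituting the definitions (\ref{X2X3X4}) into $|X_3|^2+2|X_2|^2+2|X_4|^2$ reproduces that polynomial; it is a check of two independently computed expressions. You instead observe that $\rho_A=TT^{\dagger}$ for the $2\times 4$ reshaping $T$ of the amplitude tensor and invoke Cauchy--Binet, so that $\det(\rho_A)$ is \emph{derived} as the sum of squared moduli of the six $2\times 2$ column-minors of $T$; these minors are read off directly as $X_2,X_2,X_3,0,X_4,X_4$ (the zero coming from the repeated middle column), which at once produces the formula and explains the otherwise mysterious weights $(2,1,2)$. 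Your opening reduction of (\ref{exu}) to (\ref{detroA}) via (\ref{TABTAC}) and Proposition \ref{PO} is exactly what the paper leaves implicit. I checked the minors and the identification $(\rho_A)_{ii'}=\sum_{jk}t_{ijk}t_{i'jk}^{*}$ against the paper's explicit (\ref{roA}); everything matches. The trade-off is that your argument requires recognizing the Gram-matrix structure and citing Cauchy--Binet, while the paper's is elementary but opaque; yours also generalizes cleanly (e.g., to symmetric states of more qubits, where $\det$ of the reduced state is again a sum of squared minors of the reshaped coefficient matrix).
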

Therefore the expression for the pairwise entanglement is 
\be{pairW}
\tau_{AB}=\tau_{AC}=\tau_{BC}=2\left(  |X_3|^2+2 |X_2|^2+2 |X_4|^2-|X_3^2-4X_2 X_4| \right). 
\ee

\begin{proof} We explicitly write the state $\psi$ in (\ref{psibas}) as $\psi=(c_{0}, c_1, c_1, c_2, c_1, c_2, c_2, c_3)^T$ and the associated density matrix $\rho=\psi \psi^\dagger$. By taking the partial trace with respect to $B$ and $C$, we obtain, 
\be{roA}
\rho_A:=\begin{pmatrix} |c_0|^2+2 |c_1|^2+|c_2|^2 & c_0 c_1^*+2 c_1 c_2^*+ c_2 c_3^* \cr 
c_0^* c_1+2 c_1^* c_2+ c_2^* c_3 & |c_1|^2+2 |c_2|^2+|c_3|^2 \end{pmatrix}, 
\ee
and, after simplifications, 
$$
\det(\rho_A)= 2|c_0|^2 |c_2|^2+|c_0|^2 |c_3|^2 + 2 |c_1|^4 + 2 |c_1|^2 |c_3|^2+
|c_2|^2 |c_1|^2+ 2 |c_2|^4
$$
$$ - 2 c_0^* c_2^*c_1^2- c_0^*c_1 c_2 c_3^* 
-2 (c_1^*)^2c_0 c_2 - 2 c_1^*c_2^2 c_3^*- c_0 c_1^*c_2^*c_3- 2 c_1 c_3 (c_2^*)^2. 
$$
By replacing the expressions of $X_2$, $X_3$, $X_4$ in (\ref{X2X3X4}) in the right hand side of (\ref{detroA}) one verifies that it coincides with the above expression of $\det(\rho_A)$. 
\end{proof}

From (\ref{pairW}) we obtain 
$$
\frac{\tau_{AB}}{2} \geq 2 |X_2|^2+ 2 |X_4|^2-4|X_2 X_4|=2\left( |X_2|-|X_4| \right)^2 \geq 0. 
$$
To have equality, that is the pairwise entanglement equal to zero,   both inequalities   have to hold with the equal sign. We must have 
$$
|X_2|=|X_4|, \quad |X_3^2-4X_2X_4|=|X_3|^2+4|X_2||X_4|=|X_3|^2+4|X_2|^2=|X_3|^2+4|X_4|^2. 
$$
The following also follows  from Proposition \ref{separ3}. 
\bp{SSZ} The only states in the symmetric sector that have both distributed and pairwise entanglements equal to zero are the separable states. 
\ep

Pairwise entanglement $\tau_{AB}$ and distributed entanglement $\tau$ are local invariant, that is,  they are invariant under local unitary  transformations, which, in the symmetric case are taken symmetric, i.e.,  of the form $X \otimes X \otimes X$, with $X \in U(2)$. A complete set of local invariant for general three qubits states and general local unitary transformations, is known. For symmetric qubit states a complete set of invariant can be obtained using the Majorana polynomial representation of symmetric states \cite{Ribeiro}. We briefly review this.\footnote{We only discuss the Majorana polynomial representation in the three qubits case. For a general treatment, we refer to \cite{Makele} and references therein.} Given a general (not necessarily symmetric) product state 
$
\psi_1 \otimes \psi_2 \otimes \psi_3, 
$
with $\psi_j:=\alpha_j |0 \rangle + \beta_j | 1 \rangle$, $j=1,2,3$, one can obtain a symmetric state of the form (\ref{psibas}) as $A\Pi \psi_1 \otimes \psi_2 \otimes \psi_3$,  where 
\be{totalsym}
\Pi:=\frac{1}{3!} \sum_{P \in S_3} P
\ee 
is the total symmetrizer,  and $A$ is a normalization factor. In particular, direct calculation shows,  with the definitions (\ref{earray}),  
\be{uno}
\Pi (\psi_1 \otimes \psi_2 \otimes \psi_3)=  \alpha_1 \alpha_2 \alpha_3 \phi_0+ \left(\frac{\alpha_1 \alpha_2 \beta_3+ \alpha_1 \beta_2 \alpha_3 +\beta_1 \alpha_2 \alpha_3}{3}  \right)\phi_1 + 
\left( \frac{\alpha_1 \beta_2 \beta_3+ \beta_1 \alpha_2 \beta_3 + \beta_1 \beta_2 \beta_3}{3}\right) \phi_2+ 
\beta_1 \beta_2 \beta_3 \phi_3. 
\ee
Viceversa given a symmetric state (\ref{psibas}) one considers the associate Majorana polynomial
$$
p_M(x)=c_0 x^3+ 3 c_1 x^2+ 3 c_2 x+ c_3, 
$$  
which by calculating the zeros and up to a common factor can be written as 
$$
p_M(x)=( \alpha_1x+  \beta_1)( \alpha_2x+  \beta_2)( \alpha_3x+  \beta_3). 
$$ 
By choosing $\psi_j:=\alpha_j |0 \rangle + \beta_j |1 \rangle$  and using (\ref{uno}), we see that the resulting symmetric state $\Pi \psi_1 \otimes \psi_2 \otimes \psi_3$  is given by (\ref{psibas}). Therefore every symmetric state is in correspondence with a not ordered triple of one qubit states $\psi_1$, $\psi_2$, $\psi_3$. Since each qubit is in correspondence with a point on the Bloch sphere (see, e.g.,  \cite{NC}) a symmetric state is in correspondence with three not ordered vectors from the origin  to the Bloch sphere in $R^3$. Furthermore, since for $X \in U(2)$, we have 
$$
X \otimes X \otimes X \Pi (\psi_1 \otimes \psi_2 \otimes \psi_3) =\Pi (X \psi_1 \otimes X\psi_2 \otimes X\psi_3), 
$$
and applying a symmetric local unitary operation corresponds to a {\it simultaneous} rotation of the three Bloch vectors of the three one qubit states  $\psi_1$, $\psi_2$ and $\psi_3$. Therefore the {\it angles} between the Bloch vectors give a complete set of invariants under local symmetric unitary operations.

We remark here that using this representation of symmetric states it is possible to assume that the states (\ref{psibas}) can be written, after local symmetric unitary operations, in special forms. In particular, after a common 
rotation, it is possible to assume that one of the Bloch vectors corresponding to $\{ \psi_1, \psi_2, \psi_3 \}$ is in a special position, for example along the $z$-axis,  while the remaining two can be rotated arbitrarily around the the first one. One possible special form to write the state (\ref{psibas}) after a local unitary transformation is the one with \ $c_3=0$ and $c_0$ and $c_2$  real (or have the same phase, recall that states are defined up to a phase factor). In order to achieve this,  take $\psi_1 \otimes \psi_2 \otimes \psi_3$ and choose $X \in U(2)$ so that $X \psi_1=|0 \rangle$ (up to a phase factor). Therefore we have 

$$
X \otimes X \otimes X\psi_1 \otimes \psi_2 \otimes \psi_3=|0\rangle \otimes \left( \cos(\theta_1) |0 \rangle +\sin(\theta_1) e^{i \chi_1} |1\rangle \right) \otimes  \left( \cos(\theta_2) |0 \rangle +\sin(\theta_2) e^{i \chi_2} |1\rangle \right) , 
$$
for real parameters $\theta_1,$ $\theta_2$, $\chi_1$, $\chi_2$. Now we can apply 
$Y \otimes Y \otimes Y$, with $Y=\begin{pmatrix} e^{i\chi} & 0 \cr 0 & e^{{-i\chi}}  \end{pmatrix}$.  We obtain 
$$
(Y \otimes Y \otimes Y) X \otimes X \otimes X\psi_1 \otimes \psi_2 \otimes \psi_3=
$$
$$
e^{i \chi} |0\rangle \otimes \left( \cos(\theta_1) e^{i\chi} |0 \rangle +\sin(\theta_1) e^{i (\chi_1-\chi) } |1\rangle \right) \otimes  \left( \cos(\theta_2) e^{i \chi} |0 \rangle +\sin(\theta_2) e^{i (\chi_2-\chi)} |1\rangle \right) = 
$$
$$
|0\rangle \otimes \left( \cos(\theta_1)  |0 \rangle +\sin(\theta_1) e^{i (\chi_1-2\chi) } |1\rangle \right) \otimes  \left( \cos(\theta_2) |0 \rangle +\sin(\theta_2) e^{i (\chi_2-2\chi)} |1\rangle \right). 
$$
The choice $\chi:=\frac{\chi_1 + \chi_2}{4}$, gives  with $\eta = \frac{\chi_1 -\chi_2}{2}$, the form 
\be{psiprodcan}
\psi_{prodcan}:=|0\rangle \otimes \left( \cos(\theta_1)|0\rangle+ \sin(\theta_1) e^{i \eta}\right)|1\rangle \otimes \left( \cos(\theta_2)|0\rangle+ \sin(\theta_2) e^{-i \eta}|1\rangle\right). 
\ee
Applying the total symmetrizer $\Pi$ in (\ref{totalsym}) to $\psi_{prodcan}$ in (\ref{psiprodcan}), one obtains a symmetric state (\ref{psibas}) with $c_3=0$ and $c_0$ and $c_2$ real. {\color{black} We remark that $3$  is the minimum number of parameters necessary to identify equivalence classes of (unitary) locally equivalent states since states  are identified up to a common phase factor and therefore (in the symmetric sector) by $6$ parameters  and $SU(2)$ has dimension $3$.}

\section{States in the Two Dimensional Invariant Subspaces and their Entanglement}\label{E2S}

We now consider the  invariant subspaces of dimension two: $V_1$ and $V_2$ described in section \ref{Para2}. Since the orthogonal  basis of $V_2$, given in equation (\ref{V2}), can be obtained from  the orthogonal  basis of $V_1$, given in equation (\ref{V1}), exchanging   $a \leftrightarrow -a$, $b \leftrightarrow -b$, $c \leftrightarrow -c $, and $(a,b,c)$ are free parameters (not all zero) we will consider without loss of generality only the subspace $V_1$. We shall calculate the pairwise entanglements and the distributed entanglement. We remark that since these states are in general not invariant under permutation (as opposed to states in the symmetric sector treated in the previous section) there is no a priori reason why $\tau_{AB}$ should be equal to $\tau_{AC}$.

In order to simplify the calculation of the entanglement measures, it is convenient to anticipate a result on  dynamics (treatment of  dynamics will be done in the next section). 
We recall that we call {\it local (special) unitary symmetric operations}, operations of the type $X \otimes X \otimes X$ with $X \in U(2)$ ($X \in SU(2)$). 

\bp{locali}
Given two states $|\psi_1\rangle$, $|\psi_2\rangle$ in the subspaces $V_1$ (same for $V_2$) it is always possible to go from  $|\psi_1\rangle$ to  $|\psi_2\rangle$, using local operations.
\ep
\bpr The Lie algebra corresponding to the Lie group of local symmetric special unitary matrices  is spanned by the matrices $iH_{x,y,z}$  defined in (\ref{Hxy}). This Lie algebra leaves $V_1$ invariant. It is in fact the standard representation of $su(2)$. This can be verified directly. Explicit computation using (\ref{V1}) shows that 
$H_z |v_1 \rangle= |v_1 \rangle $,  $ H_z |w_1\rangle=- |w_1 \rangle $. 
 Calculating $iH_x|v_1\rangle$  we get  
$$
iH_x |v_1 \rangle =-i x_3 |101\rangle -i x_2 |110 \rangle + i (x_2+ x_3) |011\rangle. 
$$
This is proportional to $|w_1\rangle$ if there exists a non zero $k$ such that $x_3=kx_6$ and $x_2=kx_7$. Now we apply the considerations of the remark \ref{later}. If $x_6=0$ then we cannot have $x_7=0$ but we have $x_3=0$. Therefore we choose $k=\frac{x_2}{x_7}\not=0$. If $x_7=0$, we cannot have $x_6=0$ but we have $x_2=0$ and we choose $k=\frac{x_3}{x_6} \not=0$. If both $x_6$ and $x_7$ are different from zero a direct calculation using the definitions of $x_{2,3,6,7}$ shows that $\frac{x_2}{x_7}=\frac{x_3}{x_6}$. Therefore, defining 
$k:=\frac{x_2}{x_7}=\frac{x_3}{x_6}\not=0$, we have the result. 
\begin{color}{black}
In all cases,  we have $$H_x |v_1 \rangle =- k |w_1 \rangle.$$ 
It can be verified, using the fact that $x_2+x_3+x_5=0$, that $H_x^2 |v_1\rangle = |v_1 \rangle$. Thus applying $H_x$ to the previous equation  $H_x |w_1\rangle=- \frac{1}{k} |v_1 \rangle$. Therefore we have $H_x k |w_1 \rangle=-|v_1\rangle$, $H_x |v_1 \rangle=-k |w_1  \rangle$. 
\end{color}

Thus on the orthogonal basis $\{ v_1, k |w_1\rangle \}$ $H_z$ and $-H_x$ act as $\sigma_z$ and $\sigma_x$ on the basis $\{|0 \rangle, |1 \rangle\}$\footnote{Notice also that because of how $k$ is defined the vectors $|v_1\rangle$ and $|w_1\rangle$ have the same length and therefore can be normalized simultaneously.} and therefore $iH_z$ and $-iH_x$ generate the Lie algebra $su(2)$. Since the corresponding Lie group, $SU(2)$ is  transitive on the complex sphere, the symmetric local transformations $X \otimes X \otimes X$, $X \in SU(2)$, are able to transfer any state to any other state (cf. \cite{JS}). 

\epr

We will now compute the entanglement measures for the states in the invariant subspaces of dimension $2$.   Since these quantities remain unchanged by using local operations, and  all the states in $V_1$ (or $V_2$) can be reached using local operation starting from an arbitrary state, as proved in the previous proposition, it is enough to   compute the measures  for a particular state.

First we will see that the distributed entanglement $\tau$ is always zero.

\bp{tanglezero}
Let $|\psi\rangle\in V_1$ (or $|\psi\rangle\in V_2$) then $\tau=0$
\ep

\bpr
Direct calculation  using (\ref{explict})  shows that $\tau=0$ for the first basis vector $|v\rangle =  x_2|001\rangle+x_3|010\rangle+x_5|100\rangle $ in the definition (\ref{V1}). Therefore the result follows using Proposition \ref{locali}.

\epr

To calculate  $\tau_{A(BC)}$, $\tau_{AB}$ and $\tau_{AC}$, we 
 recall that since  $\tau=0$ from (\ref{explict})
 \be{fol}
\tau_{A(BC)}=\tau_{AB}+\tau_{AC}.  
 \ee
 Since these quantities are constant on $V_1$, let us calculate them at $|v_1 \rangle$ in (\ref{V1}). We only need to compute $\tau_{A(BC)}$ and $\tau_{AB}$ since $\tau_{AC}$ will follow from (\ref{fol}). Computation of  $\rho_A$ and $\rho_{AB}$ for the state $|v_1 \rangle$ gives

We have:
\be{rhoA}
\rho_A=\left( \begin{array}{cc}
 |x_2|^2 + |x_3|^2  &     0 \\
0 &  |x_5|^2
 \end{array}
 \right), 
\ee

\be{rhoAB}
\rho_{AB}=\left( \begin{array}{cccc}
 |x_2|^2  & 0   &  0  & 0 \\
 0   &|x_3|^2  & x_3{x}^*_5 & 0 \\
 0  &  x_5   x^*_3  & |x_5|^2  & 0 \\
 0&0 &0  &  0 
 \end{array}
 \right).
\ee
Using (\ref{explittabc}), we obtain

\be{tauABC}
\tau_{A(BC)}= 4 \left(|x_2|^2+|x_3|^2\right)|x_5|^2= 4(|x_2|^2 + |x_3|^2) |x_2 +x_3|^2. 
\ee

To compute $\tau_{AB}$ one has to calculate the eigenvalues of $\rho_{AB} \sigma_y \otimes \sigma_y  \rho^*_{AB} \sigma_y \otimes \sigma_y$ which using formula (\ref{rhoAB}) can be seen to be:  zero with multiplicity two and the eigenvalues of the $2 \times 2$ matrix
$$
\begin{pmatrix} 2 |x_3|^2 |x_5|^2 & 2|x_3|^2 Re(x_5   x^*_3) \cr 
2|x_5|^2Re(x_5   x^*_3) & 2  |x_5|^2 |x_3|^2 \end{pmatrix},  
$$ 
which are 
$\lambda_1^2=2|\mu|(|\mu|+ |Re(\mu)|)$, $\lambda_2^2=2|\mu|(|\mu|-| Re(\mu)|)$, with $\mu:=x_5 x_3^*$. 
 Using  formula (\ref{tab}) we have since $\lambda_3=\lambda_4=0$, 
 \be{simpl99}
\tau_{AB}=(\lambda_1 - \lambda_2)^2= 2 |\mu| \left( \sqrt{|\mu|+ |Re(\mu)|} -\sqrt{|\mu|-  |Re(\mu)|}  \right)^2= 
4 |\mu|\left( |\mu|- |Im(\mu)| \right). 
 \ee
We also have $\tau_{AC}=\tau_{A(BC)}-\tau_{AB}$. 

{ \color{black}  
\br{productstates} From formula (\ref{tauABC})  it follows that $\tau_{A(BC)}$ is zero if and only if $x_5=0$ ($x_2$ and $x_3$ cannot be simultaneously zero because this would imply the vector $|v_1\rangle$ to be zero). In this case,  $|v_1\rangle$ would be a product state (of the form $|0\rangle \otimes \tilde \psi_{BC}$ for a state $\tilde \psi_{BC}$ on the subsystem $B-C$). Since the local symmetric unitary group is transitive on the subspace $V_1$ every state in this subspace is a product state as expected when the entanglement is zero. The condition on the entanglement $\tau_{AB}$ is less intuitive. It can be state by saying that $\mu:=x_5x^*_3$ is purely imaginary.  
\er
}

\section{Symmetric Dynamics on the Invariant Subspaces}\label{SSD1}

We now study how the Lie group $U^{S_3}(8)$ of symmetric dynamics, i.e., unitary transformations which commute with the permutation group of three objects, $S_3$,  acts on its invariant subspaces. On the two dimensional invariant subspaces $V_1$ and $V_2$ this group acts as $U(2)$ and its induced dynamics is not more rich than the one of the group  generated by  the symmetric local transformations 
$X \otimes X \otimes X$ and multiples of the $8 \times 8$ identity. These transformations do not modify the entanglement measures and in particular the distributed entanglement which is zero for any subspace as we have seen in  the previous section.

 More interesting is the dynamics of $U^{S_3}(8)$ on the four dimansional symmetric sector which can be proven (see, e.g., \cite{noisym1}) to be given by all possible unitary $4\times 4$ matrices. The local symmetric unitary transformations are a Lie subgroup of $U^{S_3}(8)$ whose Lie algebra is spanned by $iH_{x,y,z}$ and multiples of the identity. If we consider the orthonormal basis $\{ \phi_0, \frac{\phi_1}{\sqrt{3}}, \frac{\phi_2}{\sqrt{3}}, \phi_3\}$, the matrices $\frac{i}{2}H_{x,y,z}$ give the four dimensional irreducible representation of $su(2)$. They can be computed using the Clebsch-Gordan coefficients \cite{Hamarmesh} (or directly by computing their action on the given basis).   They are 
\be{Sx}
S_x:=\frac{1}{2}\begin{pmatrix}      0 & \sqrt{3} i  & 0 & 0 \cr  \sqrt{3} i & 0 &  2 i & 0 \cr 
0 & 2 i & 0 & \sqrt{3}i  \cr 0 & 0 &  \sqrt{3} i  & 0 \end{pmatrix}, 
\ee   
\be{Sy}
S_y:=\frac{1}{2}\begin{pmatrix}  0 & -\sqrt{3}& 0 & 0 \cr \sqrt{3} & 0 & -2 & 0 \cr 
0 & 2 & 0 & -\sqrt{3} \cr 0 & 0 & \sqrt{3} & 0 \end{pmatrix}, 
\ee
\be{Sz}
S_z:=\frac{1}{2}\begin{pmatrix} 3 i & 0 & 0 & 0 \cr 
0 & i & 0 & 0 \cr 
0 & 0 & -i & 0 \cr 
0 & 0 & 0 & -3i \end{pmatrix},  
\ee
and  satisfy the standard commutation relations for $su(2)$, 
\be{commurel2}
[S_x, S_y]=S_z, \qquad [S_y, S_z]=S_x, \qquad [S_z, S_x]=S_y. 
\ee

The Lie subgroup of $SU(4)$ corresponding to the Lie algebra spanned by $S_{x,y,z}$ ( corresponding to local symmetric transformations)  leaves the measures of entanglement unchanged. In fact, it is a {\it maximal} Lie subgroup  
having this property   as shown in the following two propositions whose proofs are postponed to the Appendix. 
\begin{proposition}\label{onlyone1}
The local Lie group corresponding to the Lie algebra spanned by the $4 \times 4$ $i \times$ identity and $S_{x,y,z}$ is maximal among the  Lie groups  leaving the distributed tangle $\tau$ unchanged on the $4$-dimensional symmetric sector. That is, there is no Lie group leaving such measure invariant which is larger than the local Lie group. 
\end{proposition}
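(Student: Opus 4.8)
The plan is to reduce the maximality statement to a Lie-algebra computation and then solve it. Write $\mathcal G_{\mathrm{loc}}$ for the local Lie group in question, with Lie algebra $\mathfrak g_{\mathrm{loc}}=\mathrm{span}_{\RR}\{iI_4,S_x,S_y,S_z\}$, and recall from \eqref{fromex} that on the symmetric sector $\tau=4|D|$ with $D:=c_0^2c_3^2-3c_1^2c_2^2-6c_0c_1c_2c_3+4c_0c_2^3+4c_3c_1^3$, a \emph{holomorphic} polynomial of degree $4$ in $(c_0,c_1,c_2,c_3)$ (it is, up to a constant, the discriminant of the Majorana cubic $p_M$). Since $U^{S_3}(8)$ acts on the symmetric sector as the full $U(4)$, it suffices to show that
\[
\mathfrak s:=\Big\{A\in u(4):\ \tfrac{d}{dt}\,\tau(e^{tA}\psi)\big|_{t=0}=0\ \text{for all }\psi\Big\}=\mathfrak g_{\mathrm{loc}}.
\]
Indeed, any Lie group $H$ with $\mathcal G_{\mathrm{loc}}\subseteq H\subseteq U(4)$ preserving $\tau$ has $\mathrm{Lie}(H)\subseteq\mathfrak s$, so $\mathfrak s=\mathfrak g_{\mathrm{loc}}$ forces $H$ to have the same (connected) identity component as $\mathcal G_{\mathrm{loc}}$, which is exactly the asserted maximality.

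First I would write the infinitesimal condition explicitly. For $A\in u(4)$, differentiating $\tau=4|D|$ along the flow gives, on the dense set $\{D\neq0\}$,
\[
\tfrac{d}{dt}\,\tau(e^{tA}\psi)\big|_{t=0}=\tfrac{4}{|D|}\,\mathrm{Re}\big(\bar D\,\partial D(A\psi)\big),\qquad \partial D(v):=\textstyle\sum_j \partial_{c_j}D\,v_j,
\]
so $A\in\mathfrak s$ iff the polynomial identity $\mathrm{Re}\big(\bar D(\psi)\,\partial D(A\psi)\big)\equiv0$ holds on $\CC^4$. The inclusion $\mathfrak g_{\mathrm{loc}}\subseteq\mathfrak s$ is then transparent: $D$ is a relative invariant of the spin-$\tfrac32$ representation $\rho:GL_2(\CC)\to GL_4(\CC)$, with $D(\rho(g)\psi)=(\det g)^6D(\psi)$; differentiating, every traceless generator annihilates $D$, and since $S_{x,y,z}=d\rho(\tfrac{i}{2}\sigma_{x,y,z})$ with $\tfrac{i}{2}\sigma_{x,y,z}\in sl_2(\CC)$ we get $\partial D(S_{x,y,z}\psi)=0$, while the phase gives $\partial D(iI_4\psi)=4iD$ and hence $\mathrm{Re}(\bar D\cdot 4iD)=0$.

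The heart of the argument is the reverse inclusion $\mathfrak s\subseteq\mathfrak g_{\mathrm{loc}}$, and I expect this to be the main obstacle. I would attack it in one of two ways. The direct route is to expand $\mathrm{Re}(\bar D\,\partial D(A\psi))$ as a polynomial in $(\psi,\bar\psi)$ with the $16$ real entries of the generic anti-Hermitian $A$ as unknowns, and evaluate the identity at a finite spanning family of test states (e.g.\ the GHZ-type combinations built from $\phi_0,\phi_3$, for which $D\neq0$ and $\partial D$ is sparse, together with a few generic superpositions); each test state yields linear constraints on the entries of $A$, and the task is to show the solution space collapses exactly to the $4$-dimensional $\mathfrak g_{\mathrm{loc}}$. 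The structural route avoids coordinates: any $A\in\mathfrak s$ preserves every level set of $\tau$, in particular the zero set $\{D=0\}$, which is the tangent developable of the twisted cubic $C=\{\phi^{\otimes3}\}\subset\mathbb P(\mathrm{Sym}^3\CC^2)$ (the separable symmetric states) and whose singular locus is precisely $C$; hence $e^{tA}$ preserves $C$, so $A\in d\rho(gl_2(\CC))$ by the classical description of the linear automorphisms of the cone over a rational normal curve. Intersecting the $8$-real-dimensional image $d\rho(gl_2(\CC))$ with $u(4)$ leaves only $\{S_x,S_y,S_z,iI_4\}$, since $d\rho(i\,su(2))$ and $d\rho(I_2)=3I_4$ are Hermitian and drop out. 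Either way one obtains $\mathfrak s=\mathfrak g_{\mathrm{loc}}$.

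Finally I would assemble the pieces: with $\mathfrak s=\mathfrak g_{\mathrm{loc}}$, no one-parameter subgroup outside $\mathcal G_{\mathrm{loc}}$ preserves $\tau$, so $\mathcal G_{\mathrm{loc}}$ is contained in no strictly larger connected Lie subgroup of $U(4)$ leaving $\tau$ invariant, giving the stated maximality. The computational route is self-contained but tedious, the polynomial bookkeeping in four complex variables being the real work; the structural route is short but imports the automorphism group of the twisted cubic, so I would present the computation and use the geometric picture as the guiding heuristic.
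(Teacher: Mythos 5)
Your proposal is essentially correct, but it takes a genuinely different route from the paper's, and your two variants are not equally finished. The paper never computes the full infinitesimal stabilizer $\mathfrak{s}$ of $\tau$: it only shows that any Lie algebra squeezed between $\mathrm{span}\{i{\bf 1}_4,S_x,S_y,S_z\}$ and $\mathfrak{s}$ must equal the former. Concretely, it evaluates the infinitesimal invariance condition on just two families of real test vectors ($\hat c_1=\hat c_2=0$, resp.\ $c_0=c_2=0$) to force $a_{1,4}=a_{2,4}=0$ in the skew-symmetric part $A$ of $F=A+iB$, and then exploits closure under brackets with $S_y$, $S_z$ and $S_x$ --- available precisely because the candidate algebra is assumed to contain the local one --- to propagate these few constraints into $A\propto S_y$ and $B\in\mathrm{span}\{{\bf 1}_4,\,S_x/i,\,S_z/i\}$. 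Your ``direct route'' instead aims at the stronger statement $\mathfrak{s}=\mathfrak{g}_{\mathrm{loc}}$ from test states alone, with no bracket trick; that is a strictly harder linear problem in the $16$ real entries of $F$, and you leave it at ``the task is to show the solution space collapses,'' so on its own it is a sketch rather than a proof. Your geometric route, however, is complete modulo classical facts and is the more informative argument: $D$ is (up to the factor $-27$) the discriminant of the Majorana cubic, $\{\tau=0\}=\{D=0\}$ is the tangent developable of the twisted cubic of separable symmetric states with singular locus the cubic itself, and the stabilizer of the twisted cubic in $PGL_4(\CC)$ is $PGL_2(\CC)$ acting through $\mathrm{Sym}^3$; intersecting $d\rho(gl_2(\CC))\oplus i\RR{\bf 1}_4$ with $u(4)$ then yields exactly $\mathfrak{g}_{\mathrm{loc}}$, which gives maximality and, beyond the paper, identifies the identity component of the entire invariance group. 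Your forward inclusion via the relative invariance $D(\rho(g)\psi)=(\det g)^6D(\psi)$ is correct and cleaner than the paper's. If you write this up, cite the classical description of the automorphisms of the rational normal curve and note explicitly that a unitary preserving the affine cone over the developable preserves its singular locus; and either carry the test-state computation to completion or adopt the paper's bracket argument, which is what makes the computational route tractable with so few test vectors.
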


\begin{proposition}\label{onlyone2}
The local Lie group corresponding to the Lie algebra spanned by the $4 \times 4$ identity and $S_{x,y,z}$ is a maximal Lie group  leaving the pairwise tangle $\tau_{AB}=\tau_{AC}=\tau_{BC}$ unchanged on the $4$-dimensional symmetric sector. 
\end{proposition}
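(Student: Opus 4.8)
The plan is to pass to the level of Lie algebras and to prove that the Lie algebra $\mathfrak{g}_{\max}$ of the \emph{largest} subgroup of $U^{S_3}(8)$ leaving $\tau_{AB}=\tau_{AC}=\tau_{BC}$ invariant on the symmetric sector coincides with the local algebra $\mathfrak{l}:=\mathrm{span}\{iI,S_x,S_y,S_z\}$ of (\ref{Sx})--(\ref{Sz}). This is exactly what is needed: the set $G_{\max}$ of all unitaries on the $4$-dimensional sector that preserve $\tau_{AB}$ is closed (as $\tau_{AB}$ is continuous), hence a Lie group containing the local group $L$, and every $\tau_{AB}$-preserving Lie group sits inside $G_{\max}$; so if $\mathfrak{g}_{\max}=\mathfrak{l}$ then no $\tau_{AB}$-preserving Lie group is of strictly larger dimension than $L$, which is the asserted maximality. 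Infinitesimally, a skew-Hermitian $A\in u(4)$ belongs to $\mathfrak{g}_{\max}$ iff the flow $e^{tA}$ is tangent to the level sets of $\tau_{AB}$ everywhere, i.e. iff $\frac{d}{dt}\tau_{AB}(e^{tA}\psi)\big|_{t=0}=0$ for every $\psi$ in the sector. This condition is linear in $A$, and applying it along the orbit $e^{sA}\psi$ shows it is equivalent to $A\in\mathfrak{g}_{\max}$.

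First I would use the representation theory of the local action to avoid testing at every state. Since $L\subseteq G_{\max}$, the algebra $\mathfrak{g}_{\max}$ is $\mathrm{Ad}_L$-invariant. The symmetric sector is the spin-$\tfrac32$ irreducible representation of $SU(2)$, so under the adjoint action of the local $SU(2)$ generated by $S_{x,y,z}$ one has the decomposition $u(4)=W_0\oplus W_1\oplus W_2\oplus W_3$, where $W_0=\mathbb{R}\,iI$, $W_1=\mathrm{span}\{S_x,S_y,S_z\}$, and $W_2,W_3$ are the (real) irreducible components of spin $2$ and $3$, of real dimensions $5$ and $7$ respectively ($1+3+5+7=16$). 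Thus $\mathfrak{l}=W_0\oplus W_1$ and a complement is $W_2\oplus W_3$. Because $\mathfrak{g}_{\max}$ is an $\mathrm{Ad}_L$-submodule containing $\mathfrak{l}$, and $W_2,W_3$ are non-isomorphic irreducibles, we get $\mathfrak{g}_{\max}=\mathfrak{l}$ as soon as $W_2\not\subseteq\mathfrak{g}_{\max}$ and $W_3\not\subseteq\mathfrak{g}_{\max}$: indeed $\mathfrak{g}_{\max}\cap W_j$ is a submodule of the irreducible $W_j$, hence $0$ once it is proper. By irreducibility it therefore suffices, for each of $W_2$ and $W_3$, to exhibit a \emph{single} generator $A$ in that component together with a \emph{single} state $\psi$ at which $\frac{d}{dt}\tau_{AB}(e^{tA}\psi)\big|_{t=0}\neq0$.

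The remaining step is this explicit first-order check, carried out with the closed formula (\ref{pairW}), namely $\tau_{AB}=2\big(|X_3|^2+2|X_2|^2+2|X_4|^2-|X_3^2-4X_2X_4|\big)$ with $X_2,X_3,X_4$ as in (\ref{X2X3X4}). Here I would pick convenient representatives of $W_2$ and $W_3$ (for instance off-diagonal skew-Hermitian matrices built from the ladder combinations $S_x\pm iS_y$, chosen to lie outside $\mathfrak{l}$) and a simple test state such as the canonical form (\ref{psiprodcan}) for generic angles. Differentiating each $X_j$ along $A\psi$ and substituting into (\ref{pairW}) produces the required nonzero derivative, and a single nonzero value is conclusive by the irreducibility reduction above.

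The hard part will be the explicit differentiation, for two reasons. First, $\tau_{AB}$ is \emph{not} smooth: the modulus term $|X_3^2-4X_2X_4|$ is singular exactly where the $3$-tangle $\tau$ of (\ref{explitau}) vanishes, so the test states must be taken off this locus, where $\tau_{AB}$ is real-analytic and the derivative is well defined; one then extends by continuity of $\tau_{AB}$ and smoothness of the flow. Second, one must secure the representation-theoretic bookkeeping, i.e. verify that $W_2$ and $W_3$ are indeed irreducible and pairwise non-isomorphic as real $\mathrm{Ad}_L$-modules, so that a single nonzero derivative expels the entire component from $\mathfrak{g}_{\max}$. Once both points are in place, the splitting $u(4)=\mathfrak{l}\oplus W_2\oplus W_3$ together with $W_2\cap\mathfrak{g}_{\max}=W_3\cap\mathfrak{g}_{\max}=0$ forces $\mathfrak{g}_{\max}=\mathfrak{l}$, which proves that the local Lie group $L$ is maximal among the $\tau_{AB}$-preserving Lie groups.
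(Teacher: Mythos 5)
Your proposal is correct in substance but follows a genuinely different route from the paper's proof. The paper works directly with a candidate generator $F=A+iB$ ($A$ real skew-symmetric, $B$ real symmetric), writes $\tau_{AB}(t)=2\bigl(g(t)-\sqrt{2f(t)}\bigr)$ with $g=\det\rho_A$ and $f=\tfrac12|X_3^2-4X_2X_4|^2$, imposes $\frac{d}{dt}\tau_{AB}|_{t=0}=0$ at the same two families of real test states used for Proposition \ref{onlyone1} (first $\hat c_1=\hat c_2=0$, then $c_0=c_2=0$) to force $a_{1,4}=a_{4,1}=0$ and $a_{2,4}=a_{4,2}=0$, and then closes the argument by repeated bracketing with $S_y$, $S_z$, $S_x$ --- exploiting that the candidate algebra contains the local algebra --- to pin down every remaining entry of $A$ and $B$. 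Your isotypic decomposition $u(4)=W_0\oplus W_1\oplus W_2\oplus W_3$ under $\mathrm{Ad}_L$ replaces that entire bracket-chasing stage by Schur's lemma: any $\tau_{AB}$-preserving algebra containing $\mathfrak{l}$ is an $\mathrm{Ad}_L$-submodule, and since the $W_j$ are pairwise non-isomorphic real irreducibles it must be a direct sum of full components, so excluding $W_2$ and $W_3$ needs only one nonvanishing first derivative apiece. This is conceptually cleaner and explains why finitely many pointwise checks suffice, which the paper's version obtains only implicitly. What your write-up still owes is the decisive content: a concrete representative of $W_2$ and of $W_3$, a concrete state off the locus $\tau=0$, and the evaluation showing $\frac{d}{dt}\tau_{AB}\neq 0$ --- these computations are guaranteed to succeed because the proposition is true, but they are exactly what the paper's case analysis supplies and cannot be omitted. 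One caution on your choice of test state: (\ref{psiprodcan}) itself is a product state, hence separable, so $X_2=X_3=X_4=0$ there and $\tau_{AB}$ attains its global minimum $0$; every admissible directional derivative vanishes at such a point and the test is vacuous. You must use its \emph{symmetrization}, i.e.\ the state of the form (\ref{psibas}) with $c_3=0$ and $c_0,c_2$ real obtained by applying $\Pi$, taken at generic angles so that it is neither separable nor on the zero locus of the $3$-tangle.
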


The symmetric Lie group $U^{S_3}(8)$ acts on the symmetric sector spanned by the vectors $\{\phi_0, \phi_1, \phi_2, \phi_3\}$ as the unitary group $U(4)$ and the associated Lie algebra acts like $u(4)$ (see, e.g., \cite{noisym1}). The Lie algebra spanned by $S_{x,y,z}$ in (\ref{Sx}), (\ref{Sy}), (\ref{Sz}) is a subalgebra of $u(4)$ corresponding to local symmetric operations and isomorphic to $su(2)$, the symmetric sector giving an irreducible representation of $su(2)$. Since $U(4)$ is transitive on the complex unit sphere, it is possible,  using elements of the Lie group $U^{S_3}(8)$,  to transfer the state from a product state $\phi \otimes \phi \otimes \phi$ to any arbitrary state in the  symmetric sector,  independently of the entanglement value of the target state. In order to analyze how the elements of $U^{S_3}(8)$ modify  the entanglement on the symmetric sector,  we analyze the structure of the Lie algebra $u(4)$ starting with describing how  
 the Lie algebra spanned by $\{S_x, S_y, S_z\}$ which leaves such measures unchanged `sits' in $u(4)$.   {\color{black} Our goal is to arrive at a factorization of elements of $U(4)$ which separates  factors which modify the entanglement measures from the symmetric local transformations that do not, trying to maximize the latter ones.}

The Lie algebra $u(4)$ admits, up to conjugacy, a Cartan decomposition (see, e.g., \cite{Helgason}) 
\be{Cartandec}
u(4)=sp(2) \oplus sp^\perp(2), 
\ee
where $sp(2)$ is the symplectic Lie algebra and  $sp^\perp(2)$ is  its orthogonal complement.\footnote{The inner product considered is the inner product $\langle A, B \rangle= kTr (AB^\dagger)$, for an appropriate positive constant $k$.} They satisfy the basic (Cartan-like) commutation relations 
\be{commurelCart}
\left[ sp(2), sp(2) \right] \subseteq sp(2), \qquad \left[ sp^\perp(2), sp(2) \right] \subseteq sp^\perp(2), \qquad \left[ sp^\perp(2), sp^\perp(2) \right] \subseteq sp(2). 
\ee 
As it is customary,  we denote by $Sp(n)$ the connected Lie group associated with $sp(n)$. 
According to Cartan decomposition theorem, every unitary $4 \times 4$ matrix $U$  can be written as 
\be{Cartandeco}
U:=K_1 A K_2, 
\ee
where $K_{1,2}$ are in $Sp(2)$ and $A$ is  the exponential of an element in a  maximal Abelian subalgebra in $sp^\perp(2)$  which in this case has dimension $2$ since we are including multiples of the identity as well in $sp^\perp(2)$.\footnote{In general the maximal Abelian subalgebra for the Cartan decomposition $su(n)=sp(\frac{n}{2})\oplus sp^\perp (\frac{n}{2})$ has dimension $\frac{n}{2}-1$ which would give $1$ in this case. However, we have included multiples of the identity since we looked at the decomposition for $u(4)$. A full treatment of the decompositions for $u(n)$ can be found in \cite{Helgason} and a summary with applications for quantum systems can be found in \cite{Mikobook}.}

The symplectic Lie algebra $sp(n)$ (which has dimension $n(2n+1)$) and its associated Lie group $Sp(n)$ have several important properties that are of interest for the study of quantum dynamics. In particular, $sp(n)$ is a {\it maximal Lie subalgebra} of $su(2n)$ which means that $sp(n)$ along with any nonzero element $X \notin sp(n)$ of $su(2n)$ generates  all of $su(2n)$. Every Lie subalgebra of $su(2n)$ which is isomorphic to $sp(n)$ is actually {\it conjugate} to $sp(n)$. Furthermore, $Sp(n)$ is {\it transitive} on the complex unit sphere $S^{2n-1}$ representing quantum states. 
This means that, for any two (normalized) quantum states $|\psi_1\rangle$   and $|\psi_2\rangle$, there exists a matrix in $Sp(n)$ such that $|\psi_2 \rangle =X |\psi_1\rangle$. This means,  in particular, that any possible value of the entanglement in the symmetric sector can be achieved by only using the transformations $K_1$ and $K_2$ in (\ref{Cartandeco}).


For our purposes we consider a  Lie subalgebra ${\cal S}$ {\it conjugate}   to $sp(2)$ in $u(4)$. We consider  the Lie algebra of $4 \times 4$ matrices of the form 
$$
F:=\begin{pmatrix} ir & \alpha & \beta & \gamma \cr 
- \alpha^* & is & \delta& -\beta \cr 
-  \beta^* & -  \delta^* & -is & \alpha \cr 
-   \gamma^* &   \beta^* & -  \alpha^* & -ir \end{pmatrix}, 
$$
with $r$ and $s$ arbitrary real numbers and $\alpha,$ $\beta$, $\gamma$, $\delta$ arbitrary complex numbers. Matrices in this Lie algebra ${\cal S}$  satisfy, 
$$
FJ+JF^T=0, 
$$ 
where $J$ is the matrix 
\be{J}
J:=\begin{pmatrix} 0 & 0 & 0 & 1 \cr 0 & 0 & -1 & 0 \cr 0 & 1 & 0 & 0 \cr -1 & 0 & 0 & 0  \end{pmatrix}, 
\ee
and formulas (\ref{Cartandec})-(\ref{Cartandeco}) hold with $sp(2)$ replaced by ${\cal S}$. 
The reason for this choice is that the matrices 
$S_{x,y,z}$ (\ref{Sx}), (\ref{Sy}), (\ref{Sz}), giving the $4$-dimensional irreducible representation of $su(2)$,   belong to ${\cal S}$ (see, e.g., \cite{Dynkin}  for a treatment of how irreducible representations of $su(2)$ fit in the corresponding unitary Lie algebra). The decomposition (\ref{Cartandeco}) therefore holds with $K_1$ and $K_2$ belonging to the connected Lie group $e^{\cal S}$ conjugate to $Sp(2)$ and associated with the Lie algebra ${\cal S}. $\footnote{Here and in the following we use the convention of denoting by $e^{\cal S}$ the connected Lie group associated with a Lie algebra ${\cal S}$.} On the other hand, the for the matrix $A$ in (\ref{Cartandeco}), we can take the product of the exponentials of two elements in a Cartan subalgebra in 
${\cal S}^\perp$.   For such a  Cartan subalgebra,  we take $\texttt{span} \{ i {\bf 1}_4, i H_{zz}\}$ where on $(C^2)^{\otimes 3}$, $H_{zz}$ is defined as $H_{zz}$ in (\ref{Hzz}). 
In the symmetric sector, in the basis $\{\phi_0, \phi_1, \phi_2, \phi_3\}$ it is given by the matrix 
\be{MatriceHzz}
iH_{zz}=\begin{pmatrix} 3i & 0 & 0 & 0 \cr 0 & -i & 0 & 0 \cr 0 & 0 & -i & 0 \cr 0 & 0 & 0 & 3i \end{pmatrix}. 
\ee
With this choice formula (\ref{Cartandeco}) can be written as 
\be{Cartandeco2}
U=K_1 e^{i{\bf 1}_4 z} e^{i H_{zz} w} K_2, 
\ee
for real parameters $z$ and $w$, with $K_1$  and $K_2$ in $e^{\cal S}$ (isomorphic to $Sp(2)$). We now turn to the factorization of $K_1$ and $K_2$ in (\ref{Cartandeco2}).

Even $sp(2)$ has a Cartan decomposition $sp(2)={\cal L} \oplus {\cal L}^\perp$ that can be chosen (up to conjugacy) between two possibilities denoted by {\bf CI} and {\bf CII} (cf. Chapter X in \cite{Helgason}). Given such a decomposition, the matrices $K_1$ and $K_2$ in (\ref{Cartandeco})  
can be written as $K$ in the following formula 
\be{K12}
K=L_1 \hat A L_2, 
\ee
where $\hat A$ is the exponential of a matrix in  a maximal Abelian subalgebra inside ${\cal L}^\perp$. We choose the decomposition {\bf CI} because this allows us to separate $S_x$, $S_y$ and $S_z$ in 
${\cal L}$ and ${\cal L}^\perp$. In particular we have ${\cal L}={\cal S} \cap so(4)$ which is given by the matrices of the form 
\be{hatL}
\hat L:=\begin{pmatrix} 0 & a & k & r \cr -a & 0 & f & -k \cr -k & -f & 0 & a \cr -r & k & -a & 0 \end{pmatrix}, 
\ee
for four real parameter $a,k,r,f$.\footnote{The Lie algebra ${\cal S}$ is only conjugate to $sp(2)$, therefore, the Lie algebra characterizing the decomposition {\bf CI} is different but conjugate with respect to the one for  $sp(2)$. The matrix $M$ which gives the conjugacy  is $M:=\begin{pmatrix}  U_1 & 0 \cr 0 & U_1^T\end{pmatrix}$ with $U_1=\frac{1}{\sqrt{2}}\begin{pmatrix} 1 & 1 \cr -1 & 1 \end{pmatrix}$. We have $Msp(2)M^T={\cal S}$.} The matrix $S_y$ in (\ref{Sy}) belongs to the Lie subalgebra ${\cal L}$ (choose $a=-\frac{\sqrt{3}}{2}$, $f=-1$ and the other parameters equal to zero) while $S_{x}$ and $S_z$ belong to 
${\cal L}^\perp$.   The dimension of the Cartan subalgebra associated to this decomposition (the {\it rank} of the decomposition) is $2$. We choose as Cartan subalgebra the one spanned by $S_z$ and $H$, where $H=\texttt{diag}(0, \frac{i}{2}, -\frac{i}{2}, 0)$. Therefore $\hat A$ in formula (\ref{K12})   can be written as 
\be{Acappello}
\hat A=e^{S_z x} e^{Hy}, 
\ee
for real values $x$ and $y$. The Lie algebra ${\cal L}$ is isomorphic to $u(2)$ with the isomorphism given by 
\be{isoid}
i{\bf 1}_2 \leftrightarrow J, 
\ee
with $J$ in (\ref{J}), 
\be{isosx}
\frac{i}{2} \begin{pmatrix} 0 & 1 \cr 1 & 0  \end{pmatrix} \leftrightarrow \frac{1}{2} \begin{pmatrix} 0 & 0  & 0  & 1 \cr 0 & 0 & 1 & 0 \cr 0 & -1 & 0 & 0 \cr -1 & 0 & 0 & 0  \end{pmatrix}  
\ee
\be{isosy}
\frac{1}{2} \begin{pmatrix} 0 & 1 \cr -1 & 0  \end{pmatrix} \leftrightarrow \frac{1}{2} \begin{pmatrix} 0 & 1  & 0  & 0 \cr -1 & 0 & 0 & 0 \cr 0 & 0  & 0 & 1 \cr 0 & 0 & -1 & 0  \end{pmatrix},  
\ee
\be{isosz}
\frac{i}{2} \begin{pmatrix} 1 & 0 \cr 0 & -1  \end{pmatrix} \leftrightarrow \frac{1}{2} \begin{pmatrix} 0 & 0  & 1  & 0 \cr 0 & 0 & 0 & -1 \cr -1 & 0  & 0 & 0 \cr 0 & 1 & 0 & 0  \end{pmatrix}.   
\ee
Consider the matrix in ${\cal L}$ 
\be{erre}
R:=\begin{pmatrix} 0 & 1 & 0 & 0 \cr -1 & 0 & -\sqrt{3} & 0 \cr 0 & \sqrt{3} & 0 & 1 \cr 0 & 0 & -1 & 0  \end{pmatrix}, 
\ee
which is orthogonal to $S_y$. Performing an Euler-like decomposition on $e^{\cal L}$, we can write any element in $e^{\cal L}$, such as $L_1$ and $L_2$ in (\ref{K12}),  as 
$L:=e ^{S_y t_1} e^{Rt_2} e^{S_y t_3} e^{J t_4}$. Combining this with  $\hat A$ in (\ref{Acappello}) we obtain that every element $K$  in $e^{\cal S}$ can be written as 
\be{finalK}
K:=e ^{S_y t_1} e^{Rt_2} e^{S_y t_3} e^{J t_4} e^{S_z t_5} e^{Ht_6}e^{J t_7}e ^{S_y t_8} e^{R t_9} e^{S_y t_{10}}. 
\ee
In this decomposition, the only elements that change the entanglement measures are factors of the type $e^{Rt}$, 
$e^{J t}$, and $e^{Ht}$.  In the resulting decomposition for $U \in U(4)$ (\ref{Cartandeco2})  to these needs  to be added the evolution $e^{i H_{zz} w}$ in (\ref{Cartandeco2}). The full decomposition of unitary transformations which combines (\ref{Cartandeco2}) with (\ref{finalK}) separates factors which do not change the entanglement on the symmetric sector from factors that do. 

\section{Control of a Three Spin Symmetric Network}\label{App}

In this section, we briefly consider a possible physical scenario where the analysis carried out in the previous sections applies. We consider in particular a network of three identical  spin $\frac{1}{2}$ particles subject to a  controlling electromagnetic field which simultaneously interacts with the three spins. The three spins  interact with each other via Ising interaction. The time varying Hamiltonian for this system is 
\be{Hamilt}
H_S:=H_S(t)=H_{zz}+H_xu_x(t)+H_yu_y(t)+H_z u_z(t), 
\ee   
with $H_{zz}$ defined in (\ref{Hzz}) and $H_{x,y,z}$ defined in (\ref{Hxy}). The functions $u_{x,y,z}=u_{x,y,z}(t)$ represent (spatially uniform) components of an electromagnetic field in the $x,y,z$ direction. The dynamics of the state is given by the Schr\"odinger equation
\be{Scroeq}
\dot \psi=-iH_S(t) \psi, \qquad \psi(0)=\psi_0. 
\ee
The controllability analysis of a quantum system (see, e.g., \cite{Mikobook}) describes the set of states that can be reached starting from $\psi_0$. In this case, such a set is \cite{noisym1}
\be{reachstat}
{\cal R}_{\psi_0}:= \{ X \psi_0 \, | \, X \in U^{S_3}(8) \}. 
\ee
In particular, if $\psi_0$ belongs to one of  the invariant subspaces of $ U^{S_3}(8) $ (or $u^{S_3}(8)$),  then, for every control function, the state  will remain  in that subspace. In fact, every state in that subspace can be reached from $\psi_0$ with an appropriate control (transitivity of the unitary group \cite{noisym1}). The invariant subspaces for the group $U^{S_3}(8)$ were described in section \ref{Para2}. 

Let us restrict ourselves to the symmetric sector where the dynamics was  described in section \ref{SSD1} and let us pose the problem of reaching from a symmetric product state a state with maximum distributed entanglement $\tau=1$, using an appropriate control function. The problem will be solved if we prove that there exists a symmetric local state $\hat \psi_0:=\phi \otimes \phi \otimes \phi$ and a time $\bar t$ such that $e^{-i H_{zz} \bar t} \hat \psi_0$ has maximum distributed entanglement. This property is referred to, in the case of (pairwise) entanglement of two qubits, as  $e^{-i H_{zz} \bar t}$ being a {\it perfect entangler} \cite{ZS} and we shall adopt this terminology here, mutatis mutandis. If there exists such a product state $\hat \psi_0$, we can use in (\ref{Scroeq}) high amplitude short time pulses (so that we can neglect in the dynamics the effect of $H_{zz}$) which will produce a local symmetric transformation from $\psi_0$ to the state $\hat \psi_0$. Then we can set the controls equal to zero and allow the evolution go as $\hat \psi_0 \rightarrow e^{-iH_{zz} t} \hat \psi_0$ for time $\bar t$. We are left with proving that $e^{-iH_{zz} \bar t}$ is a perfect entangler. 

\bp{PE} There exists a $\bar t$ such that $e^{-iH_{zz} \bar t}$ is a perfect (distributed) entangler transferring the local symmetric state 
\be{chosen}
\hat \psi_0= \left( \frac{1}{\sqrt{2}} |0 \rangle + \frac{1}{\sqrt{2}} |1 \rangle \right)^{\otimes 3}, 
\ee
to a state of maximum distributed entanglement $\tau=1$. 
\ep 
\bpr The state $\hat \psi_0$ in (\ref{chosen}) can be written in the form  (\ref{psibas}) with $c_0=c_1=c_2=c_3=\frac{1}{2\sqrt{2}}$. Using  the explicit expression of $H_{zz}$ in (\ref{MatriceHzz}) we have that $c_{0,1,2,3}$ vary with time as 
\be{c0123}
c_0(t)=c_3(t)=\frac{e^{3it}}{2 \sqrt{2}}, \qquad c_1(t)=c_2(t)=\frac{e^{-it}}{2 \sqrt{2}}. 
\ee 
Using the formula for the distributed entanglement $\tau$ in (\ref{explitau}) together 
with the expressions of $X_2$, $X_3$, $X_4$ in (\ref{X2X3X4}) we get $X_2=X_4=\frac{1}{8}(e^{2it}-e^{-2it})$, 
$X_3=\frac{1}{8} (e^{6it}-e^{-2it})$, and 
$$
\tau(t)=4\left| X_3^2-4X_2 X_4\right|=\frac{1}{16} \left| (e^{6it} - e^{-2it})^2-4(e^{2it} - e^{-2it})^2 \right|=
\frac{1}{16} \left| (e^{6it} -e^{-2it})^2+ 16 \sin^2(2t)  \right|. 
$$
The function on the right hand side has a maximum equal to $1$ at $t=\frac{\pi}{4}$. Therefore the proposition holds with $\bar t=\frac{\pi}{4}$. 
\epr

\section{Conclusions}\label{Summa}

In this paper,  we have given an analysis of the states of a 
three qubit quantum system under the action of the Lie group $U^{S_3}(8)$ of unitary matrices which commute with the symmetric group of three objects. This is motivated by the controlled dynamics 
of symmetric spin networks with  three 
spin $\frac{1}{2}$ particles, as described in section \ref{App}. The Hilbert space of three qubits  splits into subspaces  of dimension $4$, $2$ and $2$, which are invariant  under the action of the Lie group $U^{S_3}(8)$. The subspace of dimension $4$ is uniquely determined and corresponds to the so called symmetric sector $W$ of states which are invariant under permutation  (symmetric states). The subspaces of dimension $2$ are not uniquely determined although they are    orthogonal to $W$ and orthogonal to each other. We have provided the following results: 

\begin{itemize}
\item[] We have parametrized all the possible decompositions of the state space in 
 invariant subspaces under the Lie group $U^{S_3}(8)$ (Theorem \ref{decotheo}). 
 
 \item[] For states in the symmetric sector $W$, we have introduced three quantities $X_2,$ $X_3$, $X_4$ in (\ref{X2X3X4}) which are easily calculated from the expression of the state and give a simple criterion of separability (Proposition \ref{separ3}). 
 
 \item[] We have calculated and expressions of the distributed entanglement and the pairwise entanglement in terms of the quantities $X_2$, $X_3$, $X_4$ Propositions \ref{PO} , \ref{explipar} and concluded that the only states on the symmetric sector which have both entanglements equal to zero are the separable states (Proposition \ref{SSZ}). 
 
 \item[] For states in the two dimensional invariant subspaces, we have proven that the distributed entanglement is always equal to zero (Proposition \ref{tanglezero}) while the pairwise entanglement will depend on the subspace considered. We provided a simple formula for it (formula (\ref{simpl99})).  
 
 \item[] We have proven that it does not exists any connected Lie subgroup of $U^{S_3}(8)$ which properly contains the Lie subgroup of local symmetric transformations and leaves unchanged the distributed entanglement (Proposition \ref{onlyone1}) or the pairwise entanglement (Proposition \ref{onlyone2}) on the symmetric sector.  
 
 \item[] We have given a decomposition of any evolution in $U^{S_3}(8)$ on the symmetric sector into (local) elements which do not modify the entanglement and factors which modify it (formulas (\ref{Cartandeco2}) and (\ref{finalK})). 
 
 \item[] We have proven that the free evolution given by a pairwise Ising interaction gives a perfect entangler for distributed entanglement on a symmetric network of three spin (Proposition \ref{PE}) and used this to prescribe a control law to drive a separable symmetric state to a state of maximal distributed entanglement.   

\end{itemize}
{\color{black}
In future research, it will be of interest to extend the results presented here to symmetric states for more than three qubits. Such extensions however will depend on a better understanding of multipartite entanglement beyond the three qubits case.}

 \section{Acknowledgement} D. D'Alessandro's  research was supported by NSF under Grant EECS-1710558
 
\section*{Data Availability Statement} The data that support this study are available from the corresponding author upon reasonable request.

\section*{Appendix: Proofs of Propositions \ref{onlyone1} and \ref{onlyone2}}

We consider a general vector in the symmetric sector 
$$
\psi=c_0 \phi_0 + \hat c_1 \frac{1}{\sqrt{3}} {\phi_1}+  \hat c_2 \frac{1}{\sqrt{3}} {\phi_2}+c_3 \phi_3, 
$$
where, comparing  with (\ref{psibas}), we have $\hat c_1=\sqrt{3}c_1$,  $\hat c_2=\sqrt{3}c_2$. 


We also follow the convention of denoting by $y_R$ and $y_I$ the real and imaginary part of a quantity $y$.

\subsection*{Proof of Proposition \ref{onlyone1}}

\begin{proof} 
\begin{color}{black}
Recall, see (\ref{explitau}), that the distributed tangle $\tau$ is given by $\tau=4|X_3^2-4X_2 X_4|$.
Consider  the quantity $X_3^2-4X_2 X_4$  written separating its real and imaginary parts as $X_3^2-4X_2 X_4:=R+iI$
\end{color}
Then invariance of $\tau$ is equivalent to invariance of the function $f:=\frac{1}{2}(R^2+I^2)$. $R$ and $I$ are functions of the complex vector $\vec v:=(c_0, \hat c_1, \hat c_2, c_3)^T:=\vec v_R+i \vec v_I$, for real vectors $\vec v_R$ and $\vec v_I$,  where $\vec v_R:= (c_{0,R}, \hat c_{1,R}, \hat c_{2,R}, c_{3,R})^T$,  $\vec v_I:= (c_{0,I}, \hat c_{1,I}, \hat c_{2,I}, c_{3,I})^T$. If $F$ is an element of the Lie algebra associated to a given Lie subgroup of $U(4)$, under the action of $e^{Ft}$, $\vec v$ changes as $\vec  v \rightarrow e^{Ft} \vec v$ and therefore $f(t)$ varies as 
\be{howitvar}
f(t)=\frac{1}{2}\left(R^2(e^{Ft} \vec v) + I^2( e^{Ft}\vec v) \right). 
\ee
Since $F$  is skew-Hermitian, it can be written as $F=A+iB$ with $A$ skew symmetric and $B$ symmetric (with both of them real).  Invariance of $f=f(t)$ implies 
\be{howitvar2}
0=\frac{d f}{dt}|_{t=0}= R \nabla R+ I \nabla I \begin{pmatrix} A \vec v_R-B \vec v_I \cr A \vec v_I + B \vec v_R \end{pmatrix}. 
\ee

\begin{color}{black}
The idea of the proof is to show that if the  matrix $F:=A+iB$ satisfies (\ref{howitvar2}) for all possible vectors $\vec v$, then it must be in the   Lie algebra  spanned by $S_{x,y,z}$, plus the $i \times$ identity. To show this,  we will first compute (\ref{howitvar2}) for   special vectors.
\end{color}

Let us consider the case of vectors for which  $\vec v_I=0$. A direct calculation using the definitions (\ref{X2X3X4}) shows that $I=0$ so that (\ref{howitvar2}) simplifies as 
\be{simp}
\dot f(0)=R \nabla R \begin{pmatrix} A \vec v_R \cr B \vec v_R \end{pmatrix} =0.   
\ee
Using the definitions (\ref{X2X3X4}) we get 
\be{star}
R=X_{3R}^2 - X_{3I}^2 -4 \left(X_{2R}X_{4R}-X_{2I} X_{4I}\right), 
\ee
and an explicit calculation, using the constraint that $\vec v_I=0$,  gives $X_{2I}=X_{3I}=X_{4I}=0$, so that we have 
\be{trentaquattro}
\nabla R(\vec v_R, 0) =2 X_{3R} \nabla X_{3R}  -4 X_{4R} \nabla X_{2R} - 4 X_{2R} \nabla X_{4R}. 
\ee
Now we  specialize further the vector $\vec v$  in (\ref{simp}). 

\begin{enumerate}

\item Set $\hat c_2=c_2=\hat c_1=c_1=0$.  
We have $X_2=X_4=0$ and $X_{3R}=c_{0R} c_{3R}$.    A  direct calculation gives 
$$
\nabla X_{3R}=\nabla \left( c_{0R}c_{3R} - c_{0I} c_{3I}  + \frac{\hat c_{1I} \hat c_{2I}}{3} - \frac{\hat c_{1R} \hat c_{2R}}{3} \right)=
$$
$$
\left(c_{3R},  \, -\frac{\hat c_{2R}}{3},   \,  -\frac{\hat c_{1R}}{3},   \, c_{0R} \, -c_{3I},  \, \frac{\hat c_{2I}}{3},  \, \frac{\hat c_{1I}}{3},   \,  -c_{0I}    \right),
$$
which, using  $\hat c_1=\hat c_2=0$ along with  $c_{0I}=c_{3I}=0$, gives 
\be{ro}
\nabla X_{3R}=(c_{3R}, \,  0, \,   0, \,   c_{0R}, \,   0, \,   0, \,   0, \,   0 ) .
\ee 
Placing this and $X_{3R}=c_{0R} c_{3R}$  in (\ref{simp}), we have 
\be{relat2}
\dot f(0)= c_{0R}c_{3R} \begin{pmatrix} c_{3R} & 0 & 0 & c_{0R} \end{pmatrix} A \begin{pmatrix} c_{0R} \cr 0 \cr 0 \cr c_{3R} \end{pmatrix}=0.  
\ee 
Assume $c_{0R} c_{3R} \ne 0$. Using the fact that $A$ is skew-symmetric, and using 
$c_{0R}^2 \ne c_{3R}^2$, this relation implies  $a_{1,4}=a_{4,1}=0$. 

\item Set $c_0=c_2=0$. 
A direct calculation using (\ref{X2X3X4})  gives $X_3=0$ and  (\ref{star}) gives  $R=4 c_{1R}^3 c_{3R}$. We have  using (\ref{trentaquattro}) and $X_3=0$
\be{weh}
\nabla R=-4 X_{4R} \nabla X_{2R}  -4 X_{2R} \nabla X_{4R}= -4c_{1R} c_{3R} \nabla X_{2R} +4 c_{1R}^2 \nabla X_{4R}. 
\ee
Now use 
\be{nablaX2R}
\nabla X_{2R}= \nabla \left( c_{0R} \frac{\hat c_{2R}}{\sqrt{3}} - c_{0I} \frac{\hat c_{2I}}{\sqrt{3}} -\frac{\hat c_{1R}^2}{3} + \frac{\hat c_{1I}}{3} \right) =
\ee
$$
\left( \frac{\hat c_{2R}}{\sqrt{3}},  \,  -\frac{2}{3} \hat c_{1R},  \,  \frac{c_{0R}}{\sqrt{3}}, \, 0, \,  - \frac{\hat c_{2I}}{\sqrt{3}}, \,  \frac{2 \hat c_{1I}}{3},  \,  - \frac{c_{0I}}{\sqrt{3}},  \,  0  \right) , 
$$
and 
\be{nablaX4R}
\nabla X_{4R}= \nabla \left(\frac{\hat c_{1R}}{\sqrt{3}} c_{3R} - \frac{\hat c_{1I}}{\sqrt{3}} c_{3I} - \frac{\hat c_{2R}^2}{3} + \frac{\hat c_{2I}^2}{3}      \right) =
\ee
$$
\left(  0 , \,  \frac{c_{3R}}{\sqrt{3}} , \,  - \frac{2}{3} \hat c_{2R} , \,  \frac{\hat c_{1R}}{\sqrt{3}} , \,  0 
,\,  -\frac{c_{3I}}{\sqrt{3}} , \,  \frac{2}{3} \hat c_{2I} , \,  - \frac{\hat c_{1I}}{\sqrt{3}}     \right) , 
$$
with $c_0=\hat c_2=0$ and $\hat c_{1I}=c_{3I}=0$, in (\ref{weh}). We get 
$$
\nabla R= \frac{4}{\sqrt{3}} \hat c_{1R}^2\left(0 , \, c_{3R}, \, 0 , \, \frac{\hat c_{1R}}{3}, \, 0 ,\, 0 , \, 0 ,\, 0  \right). 
$$
Placing this in (\ref{simp}) , we get 
\be{fdotz}
\dot f(0)=\frac{16}{9} \hat c_{1R}^5 c_{3R} \begin{pmatrix} 0 & c_{3R} & 0 & \frac{\hat c_{1R}}{3} \end{pmatrix} A 
\begin{pmatrix} 0 \cr \hat c_{1R} \cr 0 \cr c_{3R} \end{pmatrix} =0.  
\ee
This gives  using $a_{1,4}=a_{4,1}=0$, $(c_{3R}^2 - \frac{\hat c_{1R}^2}{3})a_{2,4}=0$ which implies $a_{2,4}=0$, if we choose $\hat c_{1R}$ and $c_{3R}$ different from zero.  
\end{enumerate}

This shows that every matrix $F=A+iB$ in the Lie algebra corresponding to the Lie group which leaves $\tau$ unchanged has to be such that $a_{2,4}=a_{1,4}=a_{4,2}=a_{4,1}=0$. Since we assume that $S_y$ in (\ref{Sy}) also belongs to such Lie algebra and $[S_y, A]$ is real (and skew-symmetric) while $[S_y, B]$ is purely imaginary (and symmetric), imposing this condition on $[S_y, A]$, shows that we must have also $a_{1,3}=a_{3,1}=0$ and  $a_{3,4}=-a_{4,3}=\frac{\sqrt{3}}{2} a_{2,3}$.  Furthermore, imposing the condition that the $(1,3)$ component is zero to $[S_y, A]$, we also get $a_{1,2}=a_{3,4}$. This shows that  the real part of $A+iB$ must be a multiple of $S_y$ in (\ref{Sy}). 

Now consider the restrictions on the $B$ symmetric matrix. Since, with $S_z$ in  (\ref{Sz}), $[S_z, i B]$ is real,  
 it  must be proportional to $S_y$. From this restriction,  it follows that $B$ must be the sum of a diagonal matrix and a matrix proportional to $S_x$ in (\ref{Sx}). Then, considering the Lie bracket $[S_x,i B]$ which must also be proportional to $S_y$ it follows that the diagonal part of $B$ must be a linear combination of the identity and $S_z$. This concludes the proof. 

\end{proof}

\subsection*{Proof of Proposition \ref{onlyone2}}

\begin{proof} We use the notations of the proof of Proposition \ref{onlyone1}. 
Let us first  consider the function $\det(\rho_A)$ in (\ref{detroA}) as 
we act on the vector $\vec v$ as defined in the previous 
proof of Proposition \ref{onlyone1} with $e^{Ft}$. That is, similarly to (\ref{howitvar}), we define a function 
\be{howitvar3}
g(t)=\det(\rho_A)\left( e^{Ft} \vec v \right) =
\left| X_3(e^{Ft} \vec v) \right|^2+ 2\left| X_2(e^{Ft} \vec v) \right|^2+2 \left| X_4(e^{Ft} \vec v) \right|^2, 
\ee 
and we calculate 
$
\frac{d}{dt}|_{t=0} g(t)
$. Notice that this is not set to zero yet. Similarly to what was done in (\ref{howitvar2}), we have 
\be{howitvar4}
\frac{d}{dt}|_{t=0} g(t)= (\nabla |X_3|^2+2 \nabla |X_2|^2 + 2 \nabla |X_4|^2) \begin{pmatrix} A \vec v_R-B \vec v_I \cr A \vec v_I + B \vec v_R \end{pmatrix}. 
\ee
Choose now an initial point so that $\vec v_I=0$ which implies that the imaginary parts of $X_2,$ $X_3$ and $X_4$ are also zero. Therefore (\ref{howitvar4}) gives 
\be{howitvar5}
\frac{d}{dt}|_{t=0} g(t)= (2 X_{3R} \nabla X_{3R} + 4 X_{2R} \nabla X_{2R} + 4 X_{4R} \nabla X_{4R}) \begin{pmatrix} A \vec v_R \cr B \vec v_R \end{pmatrix}. 
\ee 

We now proceed considering subcases as in Proposition \ref{onlyone1},   in fact, the same subcases. 

\begin{enumerate}

\item Set  $\hat c_1=\hat c_2=0$. In this case $X_2=X_4=0$ and $\nabla X_{3R}$ was already computed in (\ref{ro}). Using this expression and the expression of $X_{3R}$,  which in this case is $X_{3R}=c_{0R}c_{3R}$, we obtain 
\be{kappa}
\frac{d}{dt}|_{t=0} g(t)= 2 c_{0R}c_{3R} K, 
\ee 
where (cf. (\ref{relat2}))  
$$
K:= \begin{pmatrix} c_{3R} & 0 & 0 & c_{0R} \end{pmatrix} A \begin{pmatrix} c_{0R} \cr 0 \cr 0 \cr c_{3R} \end{pmatrix}. 
$$
With the given definitions, the expression of the pairwise entanglement (\ref{pairW}) as a function of $t$ is 
$$
\tau_{AB}(t)= 2 \left( g(t) - \sqrt{2 f(t)}\right). 
$$
The condition $\frac{d\tau_{AB}}{dt}|_{t=0}$ gives  
\be{maincond}
\dot g(0)- \frac{ \dot f(0)}{\sqrt{2f(0)}}=0, 
\ee
which using (\ref{kappa}) and (\ref{howitvar2}) and (\ref{relat2}) and the expression of $f(0)=\frac{1}{2} c_{0R}^2 c_{3R}^2$,  gives 
$$
2c_{0R}c_{3R} |c_{0R}c_{3R}| K-  c_{0R}c_{3R} K=0. 
$$
The quantity $c_{0R}c_{3R}$ can be chosen so that this implies  $K=0$ and, as in the proof of Proposition \ref{onlyone1}, this gives $a_{14}=a_{4,1}=0$.

\item Set $c_0=c_2=0$. 

This gives $X_3=0$. We also have $X_{2R}=-c_{1R}^2=-\frac{\hat c_{1R}^2}{3}$, $X_{4R}=c_{1R} c_{3R}=\frac{\hat c_{1R}}{\sqrt{3}} c_{3R}$. The quantities $\nabla X_{2R}$ and $\nabla X_{4R}$ in this case were calculated in (\ref{nablaX2R}) (\ref{nablaX4R}). These formulas give, with $c_0=c_2=0$, 
\be{allfor1}
\nabla X_{2R}=\left( 0, -\frac{2}{3} \hat c_{1R}, 0,   0, 0 , 0 , 0 , 0 \right), 
\ee
\be{allfor2}
\nabla X_{4R}=\left( 0, \frac{c_{3R}}{\sqrt{3}} , 0, \frac{\hat c_{1R}}{\sqrt{3}}, 0 , 0 , 0 , 0 \right). 
\ee
 Using these formulas in (\ref{howitvar5}), we obtain 
 \be{ddt}
 \frac{d}{dt}|_{t=0} g(t)= \frac{4}{3} \hat c_{1R} \left( 0 , c_{3R}^2+\frac{2}{3} \hat c_{1R}^2, 0, \hat c_{1R} c_{3R}, 0, 0, 0, 0 \right)\begin{pmatrix} A \vec v_R \cr B \vec v_{R}  \end{pmatrix}. 
 \ee
 In this case, $X_3-4X_2X_4= 4c_{1R}^3 c_{3R}=\frac{4}{3} \frac{\hat c_{1R}^3}{\sqrt{3}} c_{3R}$, and therefore $f(0)=\frac{1}{2} |X_3-4X_2 X_4|^2=\frac{8}{27} \hat c_{1R}^6 c_{3R}^2$.  
 Replacing this in (\ref{maincond}),  together with the expression of $\dot f(0)$ calculated in (\ref{fdotz}), we get 
 \be{REP}
 \frac{4}{3} \hat c_{1R} \left(0, c_{3R}^2+ \frac{2}{3} \hat c_{1R}^2, 0, \hat c_{1R}c_{3R} \right) A \begin{pmatrix} 0 \cr \hat c_{1R} \cr 0 \cr c_{3R} \end{pmatrix} 
- \frac{4}{3} \sqrt{3} \frac{\hat c_{1R}^5 c_{3R}}{|c_{1R}^3 c_{3R}|} (0, c_{3R}, 0,\frac{\hat c_{1R}}{3}) A  \begin{pmatrix} 0 \cr \hat c_{1R} \cr 0 \cr c_{3R} \end{pmatrix} =0. 
 \ee
Now choose $c_{3R}=\hat c_{1R}$.  Using this in (\ref{REP}), we obtain,  after simplifications, 
$$
\begin{pmatrix} 0 &  5-3 \sqrt{3} & 0 & 3 -\sqrt{3} \end{pmatrix} A \begin{pmatrix} 0 \cr 1 \cr 0 \cr 1 \end{pmatrix}=0 
$$ 
This, using $a_{1,4}=a_{4,1}=0$,  implies $a_{2,4}=a_{4,2}=0$. 
\end{enumerate}

The rest of the proof proceeds as  the proof of Proposition \ref{onlyone1}.

\end{proof}

\end{document}